\def \v{\varepsilon}
\def \b{\mathbf}
\newcommand{\Y}{\b{Y}}
\newcommand{\X}{\b{X}}
\newcommand{\x}{\b{x}}
\newcommand{\y}{\b{y}}
\newtheorem{lemma}{Lemma}
\begin{document}
\renewcommand{\baselinestretch}{1}
\def\Quote{\begin{quotation}\normalfont\small}
\def\EndQuote{\end{quotation}\rm}
\def\BigHeading{\bfseries\Large}\def\MediumHeading{\bfseries\large}
\def\bct{\begin{center}}
\def\ect{\end{center}}
\font\BigCaps=cmcsc9 scaled \magstep 1
\font\BigSlant=cmsl10    scaled \magstep 1
\def\lbk{\linebreak}
\def\Report{Precision Therapeutic Biomarker Identification}
\def\Author{Liu and Rao}
\pagestyle{myheadings}
\markboth{\Author}{\Report}
\thispagestyle{empty}
\bct{\BigHeading
Precision Therapeutic Biomarker Identification with Application to the Cancer Genome Project}\\\vskip15pt
Hongmei Liu and J. Sunil Rao
\vskip5pt
{\small \it Division of Biostatistics, University of Miami, Miami, FL 33136\\
h.liu7@med.miami.edu\\
JRao@biostat.med.miami.edu}
\ect

\Quote
\vskip-5pt\noindent
Cancer cell lines have frequently been used to link drug sensitivity and resistance with genomic profiles. To capture genomic complexity in cancer, the Cancer Genome Project (CGP) (Garnett et al., 2012) screened 639 human tumor cell lines with 130 drugs ranging from known chemotherapeutic agents to experimental compounds.  Questions of interest  include: i) can cancer-specific therapeutic biomarkers be detected, ii) can drug resistance patterns be identified along with predictive strategies to circumvent resistance using alternate drugs, iii) can biomarkers of drug synergies be predicted ? To tackle these questions, following statistical challenges still exist: i)biomarkers cluster among the cell lines; ii) clusters can overlap (e.g. a cell line may belong to multiple clusters); iii) drugs should be modeled jointly.  We introduce a multivariate regression model with a latent overlapping cluster indicator variable to address above issues.   A generalized finite mixture of multivariate regression (FMMR) model in connection with the new model and a new EM algorithm for fitting are proposed.   Re-analysis of the dataset sheds new light on the therapeutic inter-relationships between cancers as well existing and novel drug behaviors for the treatment and management of cancer.
\vskip5pt\noindent\sl Key Words: \rm Cancer biomarkers; EM algorithm; finite mixture of multivariate regression model; LASSO; overlapping clustering.
\EndQuote

\section{Introduction}
\subsection{Data Description}
\label{idd}

The use of drugs to selectively target specific genetic alterations in defined patient subpopulations has seen significant successes. One example can be found in the treatment of chronic myeloid leukaemia (CML) where the first consistent chromosomal abnormality associated with a human cancer was identified back in the 1960s.  Fast forward to the 1980s where the consequence of this abnormality was discovered to be the production  of an abnormal gene called BCR-ABL.  Intense drug discovery programs were initiated to shut down the activity of BCR-ABL, and in 1992, imatinib (Gleevec) was developed.  In 1998, the drug was tested in CML patients who had exhausted standard treatment options and whose life expectancy was limited, with remarkable results in their blood counts returning to normal.  In 2001, the FDA approved imatinib.  Today, a once commonly fatal cancer now has a five-year survival rate of 95\% \citep{druker2006five}.

Achievements like this largely inspire today's high throughput screening studies of linking cancer drugs (known or in development) to specific genomic changes which could be used as therapeutic biomarkers.  The hope is that such analyses will shed light on biological mechanisms underlying drug sensitivity, tumor resistance and potential drug combination synergies.

Cancer cell lines have frequently been used as a convenient way of conducting such studies. For a systematic search of therapeutic biomarkers to a variety of cancer drugs, the Cancer Genome Project (CGP) \citep{garnett2012systematic} screened 639 human tumor cell lines,  which represent much of the tissue-type and genetic diversity of human cancers, with 130 drugs.  These drugs, including approved drugs, drugs in development as well as experimental tool compounds, cover a wide range of targets and processes involved in cancer biology. A range of 275--507 cell lines were screened for each drug. The effect of a 72h drug treatment on cell viability was examined to derive such measures of drug sensitivity as the half-maximal inhibitory concentration ($IC_{50}$).  The cell lines underwent sequencing of 64 known cancer genes, genome-wide analysis of copy number gains and losses, and expression profiling of 14,500 genes.

Given the degree of complexity of this dataset, the multivariate analysis of variance (MANOVA) and the Elastic-Net regression applied in \citet{garnett2012systematic} are insufficient for precise knowledge discovery.  First, the marginal drug-feature associations discovered in MANOVA rarely reflect true relationships, as it is more likely that sensitivity of cancer cells to drugs depends on a multiplicity of genomic and epigenomic features with potential interactions.  Second, the Elastic-Net regression fails to concern following issues: 1) since the 639 cell lines come from a variety of cancer tissue types, there is likely additional heterogeneity manifested as subpopulations with overlaps in data; 2) note that the 130 drugs (response variables) are hardly independent, one can improve the prediction accuracy by modeling with multiple drugs \citep{breiman1997predicting}.

Moreover, there are some direct questions of interest from a subject matter perspective that we want to address.
These include, i) can cancer-specific therapeutic biomarkers be detected, ii) can drug resistance patterns be identified along with predictive strategies to circumvent resistance using alternate drugs, iii) can biomarkers of combination therapies be identified to help predict synergies in drug activities ?
To tackle these questions and previously discussed statistical challenges, we propose a multivariate regression model with a latent overlapping cluster indicator variable.  Fitting procedures inducing concurrent variable selection are introduced.

The rest of the paper is organized as follows.  In Section~\ref{lr}, we give a selective overview of existing clustering and overlapping clustering methods for general (without response variables) and regression data.  In Section~\ref{prm},  a new statistical model is introduced,  a generalized FMMR model in connection with the new model  and a new EM algorithm for fitting are provided.  We also establish a type of consistency optimality for estimation of the generalized FMMR model and perform some small simulation studies to empirically demonstrate this.  In Section~\ref{pm},  we put forward another fitting solution to the new model for comparison with the generalized FMMR model.  Section~\ref{dsd} contains a comprehensive re-analyses of the CGP data using the proposed method. Discussion is included in Section~\ref{dis2}.

\subsection{Relevant Statistical Literature Review}
\label{lr}

Clustering is a well established technique to group data elements based on a measure of similarity.  Traditional clustering techniques generate partitions so that each data point belongs to one and only one cluster. It has long been recognized that such ideal partition seldom exists in real data \citep{needham1965computer}. It is more likely that clusters overlap in some parts.  To handle the overlapping issue, \citet{lazzeroni2002plaid} put forward the well-known plaid model for two-sided overlapping clustering for gene expression data, see also \citet{turner2005improved} for improved plaid model and \citet{zhang2010bayesian} for Bayesian plaid model formulation.  Its numerical solution, however, produces unsatisfactory cluster retrievals (see Section~\ref{pm}).  Other overlapping clustering methods include the ``naive" finite mixture (FM) model with a hard threshold on posterior membership probabilities, the probabilistic model \citep{banerjee2005model} and the multiplicative mixture model based approach \citep{fu2008multiplicative}.  
%\cite{segal2002decomposing} introduced a probabilistic model for one-sided overlapping clustering.  \cite{banerjee2005model} later generalized this probabilistic model to any regular exponential family distribution. However both of the methods suffer from computational complexity in estimating a binary membership matrix $\b{M} \in \{0, 1\}^{n \times K}$, which is an integer optimization problem.

%An alternative "naive" overlapping clustering method is the finite mixture (FM) model with a hard threshold $\alpha$ on cluster membership probabilities. This approach assigns a data point to  cluster $k$ as long as its posterior probability of belonging to cluster $k$ is larger than a pre-specified threshold $\alpha$,  and hence enables an item to belong to multiple clusters. As pointed out by \cite{banerjee2005model}, this method is problematic because: 1) it is hard to choose the value of $\alpha$; and 2) it is not a natural generative model for overlapping clustering since one underlying assumption of the mixture model is that each object comes from only one mixture component. Later \cite{fu2008multiplicative} introduced an overlapping clustering approach based on the multiplicative mixture model \cite{heller2007nonparametric}. This model ends up with the same computational issue as  \cite{banerjee2005model}.   

More often interest centers on investigating the relationship between response variables $\Y \in \mathbb{R}^q $ and covariates $\X \in \mathbb{R}^p$ by fitting a regression model rather than to explore the $\X$ or $\Y$ on its own.  In regression analyses with $q=1$, finite mixture of regression (FMR) models are commonly used to capture unobserved cross-sectional heterogeneity in the data \citep{jedidi1996estimating}.  The FMR model postulates that a sample of observations come from a finite mixture of latent partitioning sub-populations with each sub-population represented by a regression model.  It was first introduced to statistical literatures by \citet{quandt1972new}. \citet{desarbo1988maximum} proposed an EM algorithm \citep{dempster1977maximum} based maximum likelihood estimation for the FMR model.  \citet{khalili2012variable} put forward a penalized likelihood approach for variable selection in FMR models.  As a natural generalization to multivariate responses case,  \citet{jones1992fitting} introduced the FMMR model. \citet{grun2008flexmix} proposed a $R$ package {\it flexmix} for fitting FMMR models, however it assumes that the response variables are independent.

\section{The Proposed Method}
\label{prm}
\subsection{Statistical modeling}
\label{stm}

For a sample (of cell lines) of $n$ observations, denote $\b{y}_i=(y_{i1}, \dots, y_{iq})^T \in \mathbb{R}^q$ a vector of responses ($IC_{50}$ values), $\b{x}_i=(x_{i1}, \dots, x_{ip_n})^T \in \mathbb{R}^{p_n}$ a vector of predictors (genomic markers) and $\b{\v}_i=(\v_{i1}, \dots, \v_{iq})^T$ a vector of random errors for the $i$th observation. Assume $\b{\v}_i \overset{iid} {\sim} N_{q} (\b{0}, \b{\Sigma})$ for $i=1, \dots, n$. The following proposed model allows overlapping clustering for multivariate regression data, 
\begin{align}
\b{y}_i = \sum_{k=1}^K \b{B}_k^T \b{x}_i  P_{ik} + \b{\v}_i, \quad i = 1, \dots, n, 
\label{eq1}
\end{align}
where $K$ is the total number of clusters, $\b{B}_k$ is an unknown $p_n \times q$ coefficient matrix for the $k$th cluster, and $P_{ik} \in \{0, 1\}$ is 1 if observation $i$ belongs to the $k$th cluster, otherwise 0.   In traditional clustering problem, it assumes that each observation belongs to exactly one cluster, namely $\sum_{k=1}^K P_{ik} = 1$ for all $i$.     Therefore we allow $\sum_{k=1}^K P_{ik} \ge 1$ so that each observation can belong to multiple clusters.

We provide some interpretation for the clusters in model (\ref{eq1}).  A cluster $k$ contains a subset of observations for whom $P_{ik}=1$.  We postulate that not all genomic features are relevant in describing the cluster $k$, thus assume a {\it sparse} coefficient matrix $\b{B}_k$.  Since the sparse patterns can vary with $k$, each cluster is represented by a unique set of biomarkers.   For observations belong to multiple clusters, their response variables are explained by multiple sets of biomarkers,  indicative of involving in several biological processes simultaneously. 
 
%In next sections, we will explore two approaches for fitting (2.1) --- a generalized FMMR model and a generalized plaid model.   

\subsection{The FMMR model}
\label{fmm}

When $\sum_{k=1}^K P_{ik} = 1$,  model (\ref{eq1}) can be characterized by a hierarchical structure which ends up with an FMMR model.   Consider a latent cluster membership random variable $z_i$ for observation $i$ from (\ref{eq1}).  Given $\sum_{k=1}^K P_{ik} = 1$, the range of $z_i$ equals to $\{1, \dots, K\}$.   Assume that $P(z_i=k) = \pi_k$ for each $k$, then $\sum_{k=1}^K \pi_k = 1$ and $\pi_k \ge 0$.  Response $\y_i$ from model (\ref{eq1}) satisfies  
\begin{align}
(\y_i \mid z_i=k, \x_i, \b{B}_k, \b{\Sigma}) \sim N_{q}(\b{B}_k^T \x_i, \b{\Sigma}).
\label{eq2}
\end{align}

Denote $\b{\Theta}=(\b{B}_1, \dots, \b{B}_K, \b{\Sigma}, \b{\pi})$ with $\b{\pi}=(\pi_1, \dots, \pi_{K-1})^T$. We can derive the joint density of $\y_i$ and $z_i$ as 
\begin{align}
f(\y_i, z_i \mid \x_i, \b{\Theta}) =  \prod_{k=1}^K \left \{ \pi_k f(\b{y}_i \mid z_i=k, \b{x}_i, \b{B}_k, \b{\Sigma}) \right \}^{I(z_i=k)}.
\label{eq3}
\end{align}

Summarizing (\ref{eq3}) over $z_i = 1, \dots, K$ yields the FMMR model
\begin{align}
f(\b{y}_i \mid \b{x}_i, \b{\Theta}) = \sum_{k=1}^K \pi_k f(\b{y}_i \mid z_i=k, \b{x}_i, \b{B}_k, \b{\Sigma}).
\label{eq4}
\end{align}

When using the EM algorithm to estimate model (\ref{eq4}),  $(\x_i, \y_i)$ is regarded as incomplete data for missing $z_i$,  so one works on the complete joint density in (\ref{eq3}).  For a sample of $n$ observations from (\ref{eq1}), the complete log-likelihood function of $\b{\Theta}$ becomes
\begin{align}
l_n(\b{\Theta}) = \sum_{i=1}^n \sum_{k=1}^K z_{ik} \log \pi_k + \sum_{i=1}^n \sum_{k=1}^K z_{ik} \log f(\b{y}_i \mid z_i=k, \b{x}_i, \b{B}_k, \b{\Sigma}), 
\label{eq5}
\end{align}
where $z_{ik} = I(z_i=k)$.

Since we do not expect all genomic markers to be informative, variable selection, to obtain a parsimonious model is necessary.   \citet{khalili2012variable} and \citet{khalili2013regularization} (for diverging model size) introduced a penalized likelihood approach for variable selection in FMR models, which was shown consistent in variable selection and highly efficient in computation.  We define a penalized complete log-likelihood function as 
\begin{align}
\tilde{l}_n(\b{\Theta}) = l_n(\b{\Theta}) - \rho_n(\b{\Theta}), 
\label{eq6}
\end{align}
where 
$$ \rho_n(\b{\Theta}) = \sum_{k=1}^K \pi_k \rho_{nk}(\b{B}_k). $$

The following two penalty functions are currently used to meet different demands in variable selection: 

1) the $L_1$-penalty in LASSO \citep{tibshirani1996regression} for simultaneous estimation and variable selection
$$ \rho_{nk}(\b{B}_k) = \lambda_k \| \b{B}_k \|_1 $$

2) a linear combination of the $L_1$- and $L_2$-penalty in Elastic-Net \citep{zou2005regularization} for simultaneous estimation and selection of grouped features
$$ \rho_{nk}(\b{B}_k) = \lambda_{k1} \|\b{B}_k\|_1 + \lambda_{k2} \|\b{B}_k\|_2^2, $$
where $\| \cdot \|_1$ and $\| \cdot \|_2$ are respectively the $L_1$- and $L_2$-norms and the tuning parameters $\lambda_{k}, \lambda_{k1}, \lambda_{k2} \ge 0$.  The $L_1$-penalty is singular at the origin, thus can shrink some coefficients to exact 0 for sufficiently large $\lambda_{k}$ or $\lambda_{k1}$ \citep{fan2001variable}.

Their revised EM algorithm can easily be generalized to maximize $\tilde{l}_n(\b{\Theta})$ in (\ref{eq6}),  hence we bypass here. In next section,  generalized FMMR model and EM algorithm are devised to fit overlapping multivariate regression data.

\subsection{The Generalized FMMR Model}

Finite mixture models (including FMR and FMMR) can only fit partitions, as can be seen from Section \ref{fmm}.  To enable overlapping clustering, one often uses a "naive" approach by applying a hard threshold $\alpha$ to finite mixture models.  This approach assigns a sample to cluster $k$ if its posterior probability of belonging to cluster $k$ is larger than a pre-specified $\alpha$, hence enabling a sample to belong to multiple clusters.  As pointed out by \cite{banerjee2005model}, this method is problematic because it is not a natural generative model for overlapping clustering, since one underlying assumption of the finite mixture model is that each observation comes from one and only one mixture component.  In this section however, we introduce a generalized FMMR model to retrieve overlapping clusters when $\sum_{k=1}^K P_{ik} \ge 1$.  This generalized model retains its ability in fitting partitions.

\begin{figure}
\centering
\includegraphics[scale=0.25]{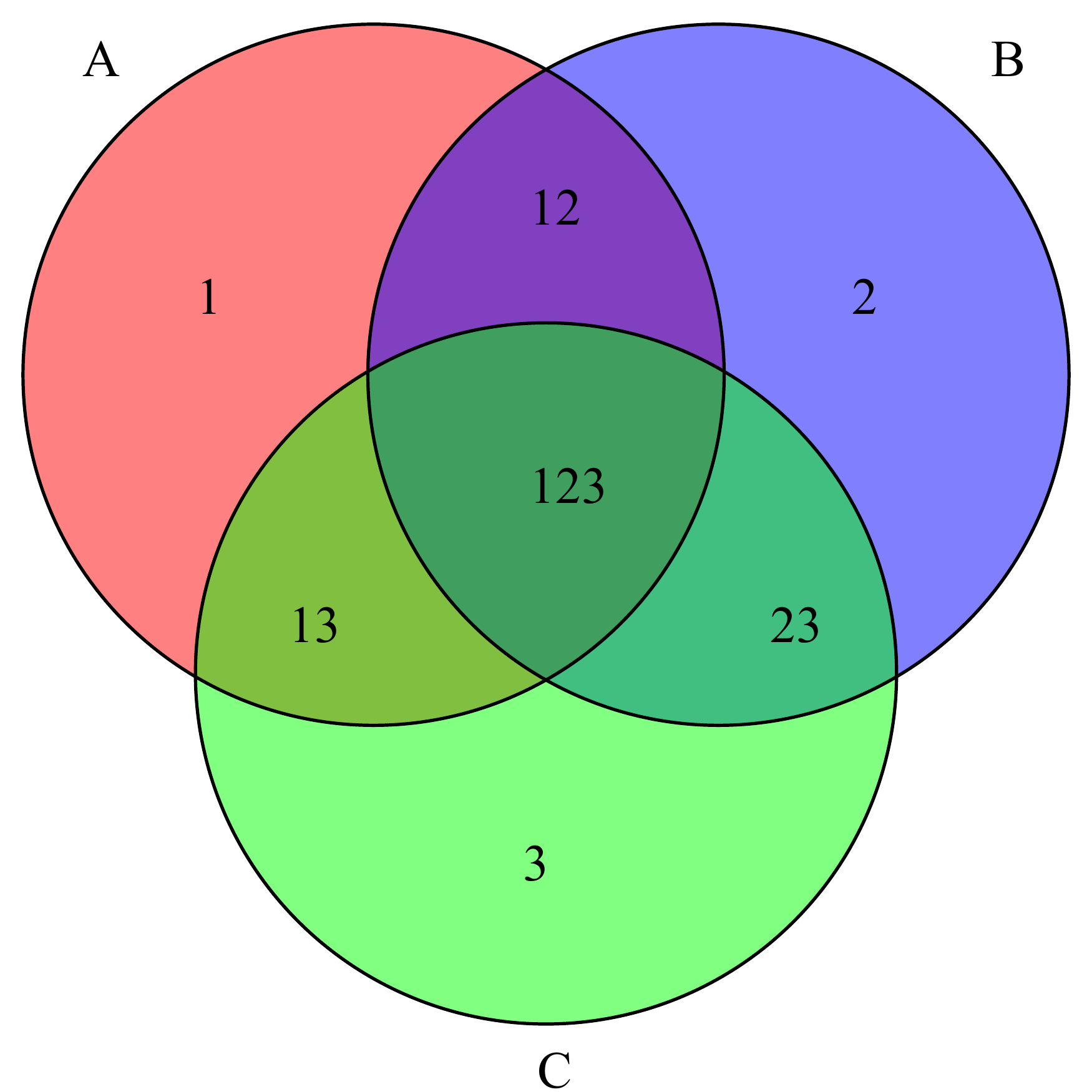}
\caption{Overall overlapping patterns given $K=3$.}
\label{fig:ol}
\end{figure}

Suppose we have $K$ {\it objective clusters} indexed by 1 to $K$.  These objective clusters refer to clusters defined in (\ref{eq1}) and can overlap with each other, resulting in $2^K-1$ types of overlapping patterns.  Let $K$ equal to 3 for an example, overall overlapping patterns are then composed of $S=\{1, 2, 3, 12, 13, 23, 123\}$.   Figure~\ref{fig:ol} shows the particular part each overlapping pattern in $S$ represents. By definition, overlapping patterns in $S$ are mutually exclusive.  And each observation $i$ from (\ref{eq1}) belongs to one and only one overlapping pattern.

In the next, we define $2^K-1$ {\it hypothetical clusters}.  Each hypothetical cluster (simply called ``cluster" since after) represents an overlapping pattern defined in advance and is indexed by an element in $T$,
$$T = \cup_{s=1}^K \left \{ (l_1 \dots l_s) : \{l_1, \dots, l_s\} \subseteq \{1, \dots, K\} \right \}. $$
Cluster $(l_1 \dots l_s)$ implies its members belonging to objective clusters $l_1, \dots, l_s$.

We now introduce a latent cluster membership random variable $z_i$ for observation $i$ from model (\ref{eq1}) and characterize the model via a hierarchical structure. Given $\sum_{k=1}^K P_{ik} \ge 1$,  the range of $z_i$ becomes $T$.  Further define 
\begin{align}
P \left (z_i = (l_1 \dots l_s) \right) = \pi_{(l_1 \dots l_s)}.
\label{eq7}
\end{align} 
Then vector $\b{\pi}=(\pi_{(l_1 \dots l_s) \in T})^T$ has
$$\sum_{(l_1 \dots l_s) \in T} \pi_{(l_1 \dots l_s)}=1, \quad \pi_{(l_1 \dots l_s)} \ge 0.$$ 
Response $\y_i$ from (\ref{eq1}) satisfies
\begin{align}
\left (\y_i \mid z_i=(l_1 \dots l_s), \x_i, \{\b{B}_{l_k}\}_{k=1}^{s}, \b{\Sigma} \right ) \sim N_q \left (\sum_{k=1}^s \b{B}_{l_k}^T \x_i, \b{\Sigma} \right).
\label{eq8}
\end{align}

Let $\b{\Theta}=(\b{B}_1, \dots, \b{B}_K, \b{\Sigma}, \b{\pi})$. By (\ref{eq7}) and (\ref{eq8}), the joint density of $\y_i$ and $z_i$ equals 
\begin{align}
f(\y_i, z_i &\mid \x_i, \b{\Theta}) = 
\label{eq9} 
\\
&\prod_{(l_1 \dots l_s) \in T} \left \{ \pi_{(l_1 \dots l_s)} f \left (\b{y}_i \mid z_i, \b{x}_i, \{\b{B}_{l_k}\}_{k=1}^s, \b{\Sigma} \right) \right \}^{I(z_i = (l_1 \dots l_s))}.  \nonumber
\end{align}

Summarizing (\ref{eq9}) over $z_i$ leads to the generalized FMMR model 
\begin{align}
f (\b{y}_i \mid \b{x}_i, \b{\Theta}) =  \sum_{(l_1 \dots l_s) \in T} \pi_{(l_1 \dots l_s)} f \left (\b{y}_i \mid z_i, \b{x}_i, \{\b{B}_{l_k}\}_{k=1}^s, \b{\Sigma} \right ). 
\label{eq10}
\end{align}
Note that if $\pi_{(l_1 \dots l_s)} = 0$ for $s > 1$, (\ref{eq10}) reduces to the traditional FMMR model in (\ref{eq4}). 

Then the (conditional) log-likelihood function of $\b{\Theta}$ for a sample of $n$ observations from (\ref{eq1}) is 
\begin{align*}
l_n^0(\b{\Theta}) = \sum_{i=1}^n \log \left(\sum_{(l_1 \dots l_s) \in T} \pi_{(l_1 \dots l_s)} f \left (\b{y}_i \mid z_i, \b{x}_i, \{\b{B}_{l_k}\}_{k=1}^s, \b{\Sigma} \right) \right).
\end{align*}

Maximizing above ordinary likelihood function yields non-zero estimates for all regression coefficients.  To induce variable selection and remove noise predictors from the regression model, 
we propose a penalized log-likelihood function
\begin{align}
\tilde{l}_n^0 (\b{\Theta}) = l_n^0(\b{\Theta}) -  \sum_{k=1}^K  \sum_{\substack{(l_1 \dots l_s): \\ k \in \{l_1, \dots, l_s\}}} \pi_{(l_1 \dots l_s)} \rho_{nk}(\b{B}_k),
\label{eq11}
\end{align}
where $\rho_{nk}(\b{B}_k)$ is the LASSO or Elastic-Net penalty function in Section~\ref{fmm}.  Notation $\sum_{\substack{(l_1 \dots l_s): \\ k \in \{l_1, \dots, l_s\}}}$ indicates that it summarizes over $(l_1 \dots l_s) \in T$ for which $k \in \{l_1, \dots ,l_s \}$.  The penalty imposed on $\b{B}_k$ is proportional to $\sum_{\substack{(l_1 \dots l_s): \\ k \in \{l_1, \dots, l_s\}}}  \pi_{(l_1 \dots l_s)}$, which by definition is proportional to the number of observations involved in the $k$th objective cluster.  This is a strategy, similar to \citet{khalili2012variable}, of relating the penalty to sample sizes for enhanced power of the method.

\subsection{Numerical Solution to the Generalized FMMR Model}
\label{nsg}

We use the renowned EM algorithm for optimization of the generalized FMMR model.  The complete log-likelihood function of $\b{\Theta}$ is
\begin{align}
l_n(\b{\Theta}) = &\sum_{i=1}^n \sum_{(l_1 \dots l_s) \in T} z_{i, (l_1 \dots l_s)} \log \pi_{(l_1 \dots l_s)}  \nonumber \\
&+ \sum_{i=1}^n \sum_{(l_1 \dots l_s) \in T} z_{i, (l_1 \dots l_s)} \log f \left (\b{y}_i \mid z_i, \b{x}_i, \{\b{B}_{l_k}\}_{k=1}^s, \b{\Sigma} \right ),  \nonumber
\end{align}
where $z_{i, (l_1 \dots l_s)} = I(z_i = (l_1 \dots l_s))$.

The penalized complete log-likelihood function for concurrent variable selection is then
\begin{align}
\tilde{l}_n(\b{\Theta}) = l_n(\b{\Theta}) -  \sum_{k=1}^K  \sum_{\substack{(l_1 \dots l_s): \\ k \in \{l_1, \dots, l_s\}}} \pi_{(l_1 \dots l_s)} \rho_{nk}(\b{B}_k).
\label{eq17}
\end{align}

Due to the overlap setting,  each $\b{B}_k$ involves in multiple clusters. Therefore coefficient matrix $\b{B}_k$ for $k=1, \dots, K$ can not be optimized independently like in usual EM algorithm. Instead we sequentially update  $\b{B}_k$ given the rest are known.  Specifically take $\rho_{nk}(\b{B}_k)$ as a LASSO penalty function for an example, our revised EM algorithm iteratively maximizes $\tilde{l}_n(\b{\Theta})$ in two steps:

E-step: Given $\b{\Theta}^m$, estimate $z_{i, (l_1 \dots l_s)}^m$ with its posterior probability by applying the Bayes' rule to (\ref{eq9}) and (\ref{eq10}), 
\begin{align}
\hat{P} (z_{i, (l_1 \dots l_s)}^m = 1 \mid \y_i, \x_i, \b{\Theta}^m) = {\pi_{(l_1 \dots l_s)}^m f \left (\b{y}_i \mid z_i, \b{x}_i, \{\b{B}_{l_k}\}_{k=1}^s, \b{\Sigma}^m \right ) \over \sum_{(l_1' \dots l_s') \in T} \pi_{(l_1' \dots l_s')}^m f \left (\b{y}_i \mid z_i, \b{x}_i, \{\b{B}_{l_k'}\}_{k=1}^s, \b{\Sigma}^m \right )}.  \nonumber
\end{align}

M-step: Given $\b{Z}^m = (z_{i, (l_1 \dots l_s)}^m)$, update the mixing proportions $\b{\pi}^{m+1}$  by 
$$ \pi_{(l_1 \dots l_s)}^{m+1} = {1 \over n} \sum_{i=1}^n z_{i, (l_1 \dots l_s)}^m, \quad (l_1 \dots l_s) \in T. $$
Note that this is obtained by maximizing the leading term of (\ref{eq17}) with respect to $\b{\pi}$.  This simplified updating scheme however works well in simulation studies.

Given $\b{Z}^m$, $\b{\pi}^{m+1}$ and $\b{\Sigma}^m$, sequentially update $\b{B}_k^{m+1} \mid \{\b{B}_s^m\}_{s=1, s \neq k}^{s=K}$ by
\begin{align}
\b{B}_k^{m+1} =  \arg \max_{\b{B}_k} \sum_{i=1}^n \sum_{\substack{(l_1 \dots l_s): \\ k \in \{l_1, \dots, l_s\}}}  z_{i, (l_1 \dots l_s)}^m \log f \left (\b{y}_i \mid z_i, \b{x}_i, \{\b{B}_{l_j}^m\}_{j=1, l_j \neq k}^s , \b{\Sigma}^m \right )  \nonumber \\
- \sum_{\substack{(l_1 \dots l_s): \\ k \in \{l_1, \dots, l_s\}}} \pi_{(l_1 \dots l_s)}^{m+1} \lambda_k \| \b{B}_k \|_1.  
\label{eq12}
\end{align}

By (\ref{eq8}),  
\begin{align}
\b{B}_k^{m+1} %= \arg \min_{\b{B}_k} \sum_{i=1}^n \sum_{\substack{(l_1 \dots l_s): \\ k \in \{l_1, \dots, l_s \}}} z_{i, (l_1 \dots l_s)}^m (\b{y}_i^* - \b{B}_k^T \b{x}_i )^T (\b{\Sigma}^m)^{-1} (\b{y}_i^* - \b{B}_k^T \b{x}_i)  \nonumber \\
%+ 2 \sum_{\substack{(l_1 \dots l_s): \\ k \in \{l_1, \dots, l_s\}}} \pi_{(l_1 \dots l_s)}^{m+1} \lambda_k \| \b{B}_k \|_1, \nonumber \\
= \arg \min_{\b{B}_k} \sum_{\substack{(l_1 \dots l_s): \\ k \in \{l_1, \dots, l_s\}}} tr \big( (\b{Y}^* - \b{X} \b{B}_k )^T \b{z}_{(l_1 \dots l_s)}^m (\b{Y}^* - \b{X} \b{B}_k) (\b{\Sigma}^m)^{-1} \big) \nonumber \\
+ 2 \sum_{\substack{(l_1 \dots l_s): \\ k \in \{l_1, \dots, l_s\}}} \pi_{(l_1 \dots l_s)}^{m+1} \lambda_k \| \b{B}_k \|_1, 
\label{eq13}
\end{align}
where 
$$\b{y}_i^* = \b{y}_i - \sum_{l \in \{l_1, \dots, l_s\} \setminus k} (\b{B}_l^m)^T \b{x}_i,  \quad  \Y^* = (\y_1^*, \dots, \y_q^*),$$
$$\quad \b{z}_{(l_1 \dots l_s)}^m = diag \left (z_{1, (l_1 \dots l_s)}^m, \dots, z_{n, (l_1 \dots l_s)}^m \right ).$$ 
Above (\ref{eq13}) is a multivariate regression problem with a LASSO penalty for estimation, which can be solved by the MRCE algorithm \citep{rothman2010sparse}.  If we ignore the covariance structure of $\b{\Sigma}^m$ merely in (\ref{eq13}),  estimation of $\b{B}_k^{m+1}$ reduces to $q$ independent LASSO regression problems.   In this case, more complex penalty functions such as the Elastic-Net and fused LASSO penalties can easily be applied.  Therefore we also have investigated the performance of this simplified strategy in simulation studies.  It works surprisingly comparable to original method utilizing the MRCE algorithm for estimation of $\b{B}_k^{m+1}$.

Given $\b{Z}^m$ and $\{\b{B}_k^{m+1}\}_{k=1}^K$, update $\b{\Sigma}^{m+1}$ by
\begin{align}
\b{\Sigma}^{m+1} = &\arg \max_{\b{\Sigma}} \sum_{i=1}^n \sum_{(l_1 \dots l_s) \in T} z_{i, (l_1 \dots l_s)}^m \log f \left (\b{y}_i \mid z_i, \b{x}_i, \{\b{B}_{l_k}^{m+1}\}_{k=1}^s,  \b{\Sigma} \right ) \nonumber \\
= &\arg \min_{\b{\Sigma}} \sum_{i=1}^n \sum_{(l_1 \dots l_s) \in T} -z_{i, (l_1 \dots l_s)}^m \log|\b{\Sigma}^{-1}|  \nonumber \\
+ &\sum_{(l_1 \dots l_s) \in T} tr \big( (\b{Y} - \b{X}\sum_{k=1}^s \b{B}_{l_k}^{m+1} )^T \b{z}_{(l_1 \dots l_s)}^m (\b{Y} - \b{X} \sum_{k=1}^s \b{B}_{l_k}^{m+1}) \b{\Sigma}^{-1} \big). 
\label{eq14}
\end{align}

By taking the derivative of (\ref{eq14}) according to $\b{\Sigma}^{-1}$, we get
$$ \b{\Sigma}^{m+1} = {\sum_{(l_1 \dots l_s) \in T} (\b{Y} - \b{X}\sum_{k=1}^s \b{B}_{l_k}^{m+1})^T \b{z}_{(l_1 \dots l_s)}^m (\b{Y} - \b{X} \sum_{k=1}^s \b{B}_{l_k}^{m+1}) \over \sum_{(l_1 \dots l_s) \in T} \sum_{i=1}^n z_{i, (l_1 \dots l_s)}^m}, $$
where $\sum_{(l_1 \dots l_s) \in T} \sum_{i=1}^n z_{i, (l_1 \dots l_s)}^m = n$.

Although we can also penalize the inverse covariance matrix $\b{\Sigma}^{-1}$ of (\ref{eq14}) like in \citet{rothman2010sparse}, simulations (not presented in the paper) show that penalizing  $\b{\Sigma}^{-1}$ results in a lower degree of clustering accuracy than not penalizing  $\b{\Sigma}^{-1}$.  This may be because by penalizing  $\b{\Sigma}^{-1}$ one introduces bias into its estimation,  a biased estimate of $\b{\Sigma}^{-1}$ in return deteriorates the estimate of $z_{i, (l_1 \dots l_s)}$ in E-step. Thus we choose to not penalize  $\b{\Sigma}^{-1}$.

Commencing with an initial value $\b{\Theta}^0$, the algorithm iterates between E- and M-steps until the relative change in log-likelihood, $|\big(l_n^{m+1} (\b{\hat{\Theta}}) - l_n^{m} (\b{\hat{\Theta}}) \big) / l_n^{m} (\b{\hat{\Theta}})|$, is smaller than some threshold value, taken as $10^{-5}$ in simulation studies and $10^{-3}$ in real data analysis.  Additionally, a cluster, whose mixing proportion is smaller than some threshold value taken as 0.01 in the paper, will be removed during iterations to avoid over estimations.

\subsection{Selection of Tuning Parameters and the $K$}
\label{st}

In preceding penalized likelihood approach,  one needs to choose the values of component-wise tuning parameters $\lambda_k$ for $k=1, \dots, K$,  which controls the complexity of an estimated model.  The data-driven method cross-validation (CV) \citep{stone1977asymptotic} is frequently adopted in literatures.  Here we use a component-wise 10-fold CV method for tuning parameters selection in (\ref{eq13}). 

Moreover, selection of the number of components $K$ is essential in finite mixture models (including FMR and FMMR).  In applications, the choice can be based on prior knowledge of data analysts.  With respect to formatted methodologies, information criteria (IC) remains by far the most popular strategy for selection of $K$. See \citet{claeskens2008model} for general treatments on this topic.  Here we choose the $K$ minimizing below $IC_n$, 
\begin{align}
IC_n(K) = -2 l_n(\b{\Theta}) + N_K a_n.
\label{eq15}
\end{align}
Above $N_k$ is the effective number of parameters in the model,
$$ N_K = |\{\b{B}_k, k=1, \dots, K\}| + |\b{\pi}| -1 + |\b{\Sigma}|, $$
where $|A|$ calculates the number of {\it nonzero} elements in $A$.  In (\ref{eq15}), $a_n$ is a positive sequence depending on $n$. The well known AIC \citep{akaike1974new} and BIC \citep{schwarz1978estimating} correspond to $a_n = 2$ and $a_n = \log(n)$ respectively.  It was shown in \citet{keribin2000consistent} that under general regularity conditions, BIC can identify the true order of a finite mixture model asymptotically.  Feasibility of his results to FMR or FMMR models is unknown yet.  We examined the performance of BIC for selection of $K$ in a generalized FMMR model via simulation studies.  It obtained a high degree of accuracy.

\subsection{The Asymptotic Properties and Simulation Studies}
\label{ss}

\citet{khalili2013regularization} showed that their approach by maximizing a penalized log-likelihood function for estimation of the FMR model is consistent in both estimation and variable selection under certain regularity conditions.  Denote $f(\b{w}; \b{\Theta})$ the joint density function of data $\b{w} = (\x, \y)$ with $\b{\Theta} \in \b{\Omega}$.  The conditional density function of $\y$ given $\x$ follows an FMR model in \citet{khalili2013regularization} and a generalized FMMR model in (\ref{eq10}) in our paper.  As defined in (\ref{eq11}), the penalized log-likelihhood function is a summation of the $\log(f(\b{w}_i; \b{\Theta}))$'s minus a penalty function.  Because the regularity conditions for asymptotic establishments in \citet{khalili2013regularization} are made on $f(\b{w}; \b{\Theta})$ and the penalty function directly, their theoretical achievements can be extended to our problem.

Denote $\b{B}_{k}^j$ the $j$th column of $\b{B}_k$ for $j = 1, \dots, q$. Let $\b{B}_{k}^j = (\b{B}_{k1}^{j}, \b{B}_{k2}^{j})$ to divide the coefficient vector into non-zero and zero subsets.  Denote $\b{\Theta}^0$ the true value of $\b{\Theta}$, $\b{\Theta}^0 = (\b{\Theta}_1^0, \b{\Theta}_2^0)$ is the corresponding decomposition such that $\b{\Theta}_2^0$ contains all zero coefficients,  and $\hat{\b{\Theta}}_n = (\hat{\b{\Theta}}_{n1}, \hat{\b{\Theta}}_{n2})$ is its estimate.  Since the dimension of $\hat{\b{\Theta}}_{n1}$ increases with $n$,  we investigate the asymptotic distribution of its finite linear transformation, $\b{D}_n \hat{\b{\Theta}}_{n1}$, where $\b{D}_n$ is an $l \times d_{n1}$ constant matrix with a finite $l$ and $d_{n1}$ is the dimension of $\b{\Theta}_1^0$.  Moreover $\b{D}_n \b{D}_n^T \to \b{D}$ and $\b{D}$ is a positive definite and symmetric matrix.  We assume that $K$ is independent of the sample size $n$ and known beforehand.  Its selection is discussed in Section~\ref{st}.

\begin{lemma}
Suppose the penalty function $\rho_{nk}(\b{B}_k)$ satisfies conditions $\mathcal{P}_0 - \mathcal{P}_2$ and the joint density function $f(\b{w}; \b{\Theta})$ satisfies conditions $R_1 - R_5$ in \citet{khalili2013regularization}.  

1.  If $ {p_n^4 \over n} \to 0$, then there exists a local maximizer $\hat{\b{\Theta}}_n$ for the penalized log-likelihood  $\tilde{l}_n(\b{\Theta})$ with 
$$  \|\hat{\b{\Theta}}_n - \b{\Theta}^0\| = O_p \left \{\sqrt{{p_n \over n}} (1+q_{2n}) \right \}, $$
where $q_{2n} = \max_{kij} \left \{{|\rho_{nk}'(b_{kij}^0 )| \over  \sqrt{n}}: b_{kij}^0 \neq 0 \right \}$, $b_{kij}^0$ is an entry of $\b{B}_k^0$.

2.  If $\rho_{nk}(\b{B}_k)$ also satisfies condition $\mathcal{P}_3$ in \citet{khalili2013regularization} and ${p_n^{5} \over n} \to 0$,  for any $\sqrt{n/p_n}$-consistent maximum likelihood estimate $\hat{\b{\Theta}}_n$, as $n \to \infty$ it has: \\
i. Variable selection consistency: 
$$P(\hat{\b{B}}_{k2}^j = \b{0}) \to 1, \quad k = 1, \dots, K \ \text{and} \ j=1, \dots, q.$$
ii. Asymptotic normality:
\begin{align}
\sqrt{n} \b{D}_n \mathcal{I}_1^{-1/2} (\b{\Theta}_1^0) \Bigg \{ \Bigg [ \mathcal{I}_1 (\b{\Theta}_1^0) - {\rho_n'' (\b{\Theta}_1^0)  \over n} \Bigg] (\hat{\b{\Theta}}_{n1} - \b{\Theta}_1^0) + {\rho_n' (\b{\Theta}_1^0)  \over n} \Bigg \}  \nonumber \\
\to^d N(0, \b{D}),
\label{eq16}
\end{align}
where $\mathcal{I}_1 (\b{\Theta}_1^0)$ is the Fisher information matrix under the true subset model. 
\label{lem1}
\end{lemma}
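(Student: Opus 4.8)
The plan is to prove Lemma~\ref{lem1} by exhibiting the generalized FMMR model of (\ref{eq10}) as a special case of the conditional density class analyzed in \citet{khalili2013regularization}, so that parts~1, 2i and 2ii drop out of their existence-of-a-consistent-local-maximizer, sparsity, and asymptotic-normality theorems, respectively. First I would observe that (\ref{eq10}) is itself a finite mixture of $|T| = 2^K-1$ multivariate normal densities sharing a common covariance $\b{\Sigma}$, the component indexed by $(l_1\dots l_s)$ having mean matrix $\tilde{\b{B}}_{(l_1\dots l_s)} = \sum_{k=1}^s \b{B}_{l_k}$. The map $(\b{B}_1,\dots,\b{B}_K)\mapsto (\tilde{\b{B}}_{(l_1\dots l_s)})_{(l_1\dots l_s)\in T}$ is linear and injective (the singleton patterns return each $\b{B}_k$), so by the chain rule every smoothness, moment, and information statement about the mixture in the $\tilde{\b{B}}$-parametrization transfers to the $\b{B}$-parametrization; in particular $f(\b{w};\b{\Theta})$ is of exactly the form to which conditions $R_1$--$R_5$ apply, and by hypothesis it satisfies them. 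Identifiability of the mixture, needed for the information matrix of the true subset model to be well defined, holds whenever the partial sums $\sum_{k\in A}\b{B}_k$, $A\subseteq\{1,\dots,K\}$, are distinct, since then the component mean functions are distinct.

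Next I would check that the penalty in (\ref{eq11}) has the structure required by \citet{khalili2013regularization}. Rewriting, $\sum_{k=1}^K \sum_{(l_1\dots l_s):\,k\in\{l_1,\dots,l_s\}} \pi_{(l_1\dots l_s)}\,\rho_{nk}(\b{B}_k) = \sum_{k=1}^K w_k(\b{\pi})\,\rho_{nk}(\b{B}_k)$ with $w_k(\b{\pi}) = \sum_{(l_1\dots l_s):\,k\in\{l_1,\dots,l_s\}}\pi_{(l_1\dots l_s)} \in [0,1]$, i.e. a bounded, component-wise reweighting of the per-cluster penalties --- the same device used in their paper to scale penalties by component sizes. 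Since the $\rho_{nk}$ are assumed to satisfy $\mathcal{P}_0$--$\mathcal{P}_2$ (and $\mathcal{P}_3$ for part~2) and the weights $w_k(\b{\pi})$ are bounded and, by part~1, $\sqrt{n/p_n}$-consistently estimated, the composite penalty inherits these conditions.

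With the model and penalty placed inside their framework, the three conclusions follow by specialization: part~1 is their existence-of-a-consistent-local-maximizer theorem, which under $p_n^4/n\to 0$ yields $\|\hat{\b{\Theta}}_n-\b{\Theta}^0\| = O_p\{\sqrt{p_n/n}\,(1+q_{2n})\}$; part~2i is their sparsity theorem, which under the extra condition $\mathcal{P}_3$ and $p_n^5/n\to 0$ forces $P(\hat{\b{B}}_{k2}^j = \b{0})\to 1$; and part~2ii is their asymptotic normality theorem for the nonzero block, giving the linearized central limit theorem (\ref{eq16}) with $\mathcal{I}_1(\b{\Theta}_1^0)$ the Fisher information of the true subset model, under $\b{D}_n\b{D}_n^T\to\b{D}$ positive definite.

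The step I expect to be the main obstacle is confirming that the restricted Fisher information $\mathcal{I}_1(\b{\Theta}_1^0)$ is positive definite in the presence of parameter sharing: because each $\b{B}_k$ enters many of the $2^K-1$ mixture components, the generalized FMMR is a \emph{constrained} finite mixture rather than the unconstrained one treated in \citet{khalili2013regularization}, and positive definiteness of the information for the shared parameters reduces to a full-rank condition on the incidence structure of $T$ --- automatic, for instance, when the singleton patterns carry positive mixing weight, and requiring care otherwise. A secondary technical point is that the weights $w_k(\b{\pi})$ depend on the estimated $\b{\pi}$, so the $\b{\pi}$-block of the penalized score is coupled to the $\b{B}$-blocks; this is handled as a $\sqrt{n}$-order perturbation, using $\|\hat{\b{\pi}}-\b{\pi}^0\| = O_p(\sqrt{p_n/n})$ from part~1 together with the boundedness of the $w_k$, so that the coupling is absorbed into the remainder terms of their normality argument.
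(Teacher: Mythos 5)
Your proposal is correct and follows essentially the same route as the paper, whose entire proof is the one-line assertion that the lemma is ``a direct extension'' of \citet{khalili2013regularization} because their regularity conditions are imposed on $f(\b{w};\b{\Theta})$ and the penalty directly. In fact you supply substantially more than the paper does --- the explicit reparametrization of the $2^K-1$ component means, the rewriting of the penalty as a bounded $\b{\pi}$-weighted sum, and the caveats about parameter sharing degrading the restricted Fisher information --- all of which the paper leaves implicit.
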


\begin{proof}
Write $\b{\Theta} = (\theta_1, \theta_2, \dots, \theta_{t_n})^T$, where $t_n$ is the total number of parameters in the model.  Then Lemma \ref{lem1} is a direct extension from \citet{khalili2013regularization}. 
\end{proof}

By Lemma \ref{lem1},  $\hat{\b{\Theta}}_n$ under the LASSO or Elastic Net penalty has a convergence rate $\sqrt{p_n/n}$ via appropriate choice of the tuning parameters.  However consistent estimation does not necessarily guarantee consistent variable selection.   By the Lemma, the LASSO or Elastic Net penalty does not lead to consistent variable selection, because on one hand $\lambda_{k}$ must be large enough to achieve sparsity, on the other hand the bias term $q_{2n}$ is proportional to the $\lambda_{k}/\sqrt{n}$.  This problem can be solved by using adaptive LASSO or adaptive Elastic Net penalty instead, which leads to concurrent  estimation  and variable selection consistency. 

We conducted a sequence of small simulation studies to numerically demonstrate optimality
properties of the new method. They are presented in Web Appendix A.

\section{Generalization of the Plaid Model}
\label{pm}

\citet{lazzeroni2002plaid} proposed the plaid model to decompose a gene expression data as the sum of overlapping layers (clusters).  Each layer contains a subset of genes and samples,  while each gene and sample can participate in multiple layers.  Denote $x_{ij}$ a gene expression data entry and $K$ is the total number of layers. The plaid model is
\begin{align*}
x_{ij} = \omega_0 + \sum_{k=1}^K (\omega_k+\alpha_{ik}+\beta_{jk})  P_{ik}Q_{jk},
\end{align*}
where $\omega_0, \omega_k, \alpha_{ik}, \beta_{jk} \in \mathbb{R}$, $P_{ik}$ is 1 if gene $i$ is in the $k$th gene-bolck and otherwise 0,  $Q_{jk}$ is 1 if sample $j$ is in the $k$th sample-block and otherwise 0.  An iterative algorithm was put forward to consecutively search for the layers.

In fact, model (\ref{eq1}) is a generalized plaid model by introducing covariates into the model so that it can cluster overlapping multivariate regression data.  Therefore, estimation procedures of the plaid model can also be used to estimate (\ref{eq1}).  In parallel to the plaid model, the parameters are estimated by minimizing the $Q$, 
$$ Q = {1 \over 2} \sum_{i=1}^n \|\b{y}_i - \sum_{k=1}^K \b{x}_i^T \b{B}_k P_{ik}\|_2^2 + \sum_{k=1}^K \rho_{nk} (\b{B}_{k}), $$
where $\rho_{nk}(\b{B}_{k})$ is the penalty function defined in Section~\ref{fmm}.

Detailed procedures for optimizing $Q$ are shown in Algorithm 1 in Web Appendix B, which is a generalization of the improved iterative algorithm by \citet{turner2005improved} to multivariate regression.  At each time, the algorithm searches for a layer that explains as much of the rest data as possible; once a layer has been found, it is subtracted from the data and remains unchanged; the algorithm stops when no further layers can be found.  However this algorithm introduces bias to parameter estimations due to its discrete updating scheme, where previous recruited layers can not be refined as new layers are recruited into the model.  To address this issue, we further revised Algorithm 1 to enable joint optimization of all layers. Detailed procedures are presented in Algorithm 2 in Web Appendix B.

\section{Therapeutic Biomarker Identification for the CGP Data}
\label{dsd}

We now turn our attention back to analyses of the CGP high throughput drug sensitivity dataset for cancer described in Section~\ref{idd}. An updated version of this data \\
(http://www.cancerrxgene.org/downloads/) contains 707 human tumor cell lines as samples, 140 drugs as response variables, and 13831 genomic features as covariates (including the tissue type, rearrangements, mutation status of 71 cancer genes,  continuous copy number data of 426 genes causally implicated in cancer, as well as genome-wide transcriptional profiles). The response variables, made up with $IC_{50}$ values from pairwise drug-cell-line screening, have some missing values. Therefore the data was first filtered by removing cell lines for which less than 50\% of the drugs were tested, resulting in 591 cell lines remaining.  In the remaining data, about 37.4\% of the cell lines are with missing values.  These missing values were then imputed via the a random forest imputation algorithm \citep{ishwaran2008random}. \\

\noindent
{\bf Direct Q1: Can cancer-specific therapeutic biomarkers be detected?} \\

We applied our new method to identify patterns of cancer-specific therapeutic biomarkers.  Note that by ``cancer-specific" we do not assume separate patterns for each cancer type but rather clusters that may be {\it driven} by one or a small number of cancer types.   Throughout the analysis, we fixed $K$ to a value of 3, although as previously shown, a BIC model selection approach could also be used.  However, fixing the value of $K$ does not limit the interesting findings that we can still find and saves on computational time.

Due to the scale and complexity of the data, the analysis was conducted in three steps.  Although simulation studies show that modeling with multiple response variables yields much higher clustering accuracy than modeling with a single response variable,  it is unreasonable to simply fit all 140 drugs (responses) in one generalized FMMR model. Because by doing so, one assumes that the cell line assignments to clusters are the same for all 140 drugs, which can hardly be true.  Thus in our analysis, we first divided the 140 drugs into several groups and then fitted each drug group with a generalized FMMR model, in which to capture active grouped genomic features, the Elastic-Net penalty along with a simplified updating scheme of $\b{B}_k$ discussed in Section \ref{nsg} was used.

In the first step, each drug $c$ was fitted by a generalized FMMR model, from which we got drug-specific cell line assignments, $\hat{\b{Z}}_c$, and cluster-wise coefficient estimates, $\hat{\b{B}}_k^c$ for $k=1, \dots, 3$.  In the second step, we used the affinity-propagation clustering (APC) algorithm \citep{frey2007clustering} to group the 140 drugs based on results from step 1. The grouping was conducted in a two-level nested manner.  In the first level, the APC algorithm was applied to all 140 drugs. The pair-wise similarity matrix required by the algorithm as input data was calculated from the Euclidean distance between $\hat{\b{Z}}_a$ and $\hat{\b{Z}}_b$ where $a$ and $b$ refer to two unique drugs.  In the second level, the APC algorithm was re-applied to  each first-level drug group.  Coefficient estimates $\hat{\b{B}}_k^c$ for $k=1, \dots, 3$ were used to calculate the pair-wise similarities.  Consequently, drugs having similar cell line assignments and cluster-wise coefficient estimates were grouped together. Resulting drug groups are shown in Web Figure 6, where the coloring and thickness of connecting lines indicate degree of closeness.   In the third step, each second-level drug group $C$ from step 2 was fitted by a generalized FMMR model, from which we got drug-group-specific cell line assignments, $\hat{\b{Z}}_C$, and cluster-wise coefficient estimates $\hat{\b{B}}_k^C$ for $k=1, \dots, 3$.

As an illustration, cell line members and coefficient estimates of clusters 1 and 12 for each second-level drug group are shown in Web Figures 7--8.  Cell line members of a cluster are depicted via frequencies of the cancer types cell lines belong to.  The frequencies are represented by the size of colored bubbles in top panel of the figures.   Estimates of other clusters are not presented in the paper due to limited space.  It's very clear that clusters being driven by different cancer types have very different therapeutic genomic profiles and yet no clusters are homogeneous in a particular cancer type.  This speaks to the great heterogeneity seen in cancer overall where in fact, cancer is not thought of as a single disease but rather many diseases characterized by different underlying biological changes.   \\

\begin{figure}
\centering
\includegraphics[scale=0.32]{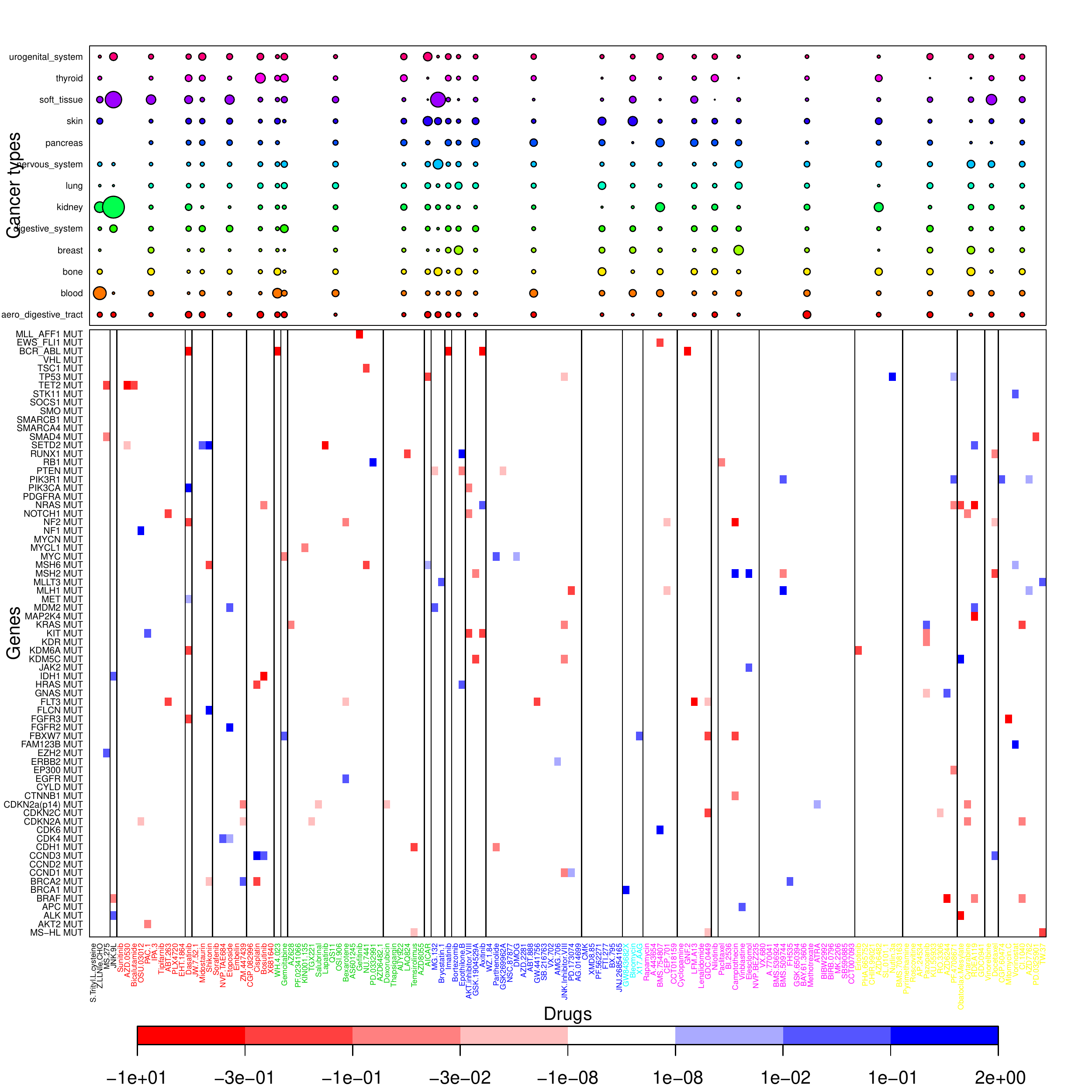}
\caption{Subset estimation results of cluster 1 by fitting the data of each second-level drug group in a generalized FMMR model.  Bottom panel shows a subset of the coefficient estimates corresponding to the mutations of 71 cancer genes, and upper panel shows cell line compositions of cluster 1 by cancer types.}
\label{fig:m1}
\end{figure}

\noindent
{\bf Direct Q2: Can drug resistance patterns be identified along with predictive strategies to
circumvent resistance using alternative drugs?}\\

We discussed in the Introduction to the paper the success story of Gleevec which was 
used for the treatment of CML based on the known specificity of targeting the BCR-ABL gene. 
It is now becoming more and more common in the treatment of cancer to first sequence known
cancer genes in tumor genomes and then design therapies accordingly \citep{bailey2014implementation}.  It is also true that
tumors can develop resistance to first line therapies by accumulating mutations which confer
resistance.  We now show how our methodology can be used to identify predictive strategies for circumventing drug resistance based on mutation data.

\begin{figure}
\centering
\includegraphics[scale=0.32]{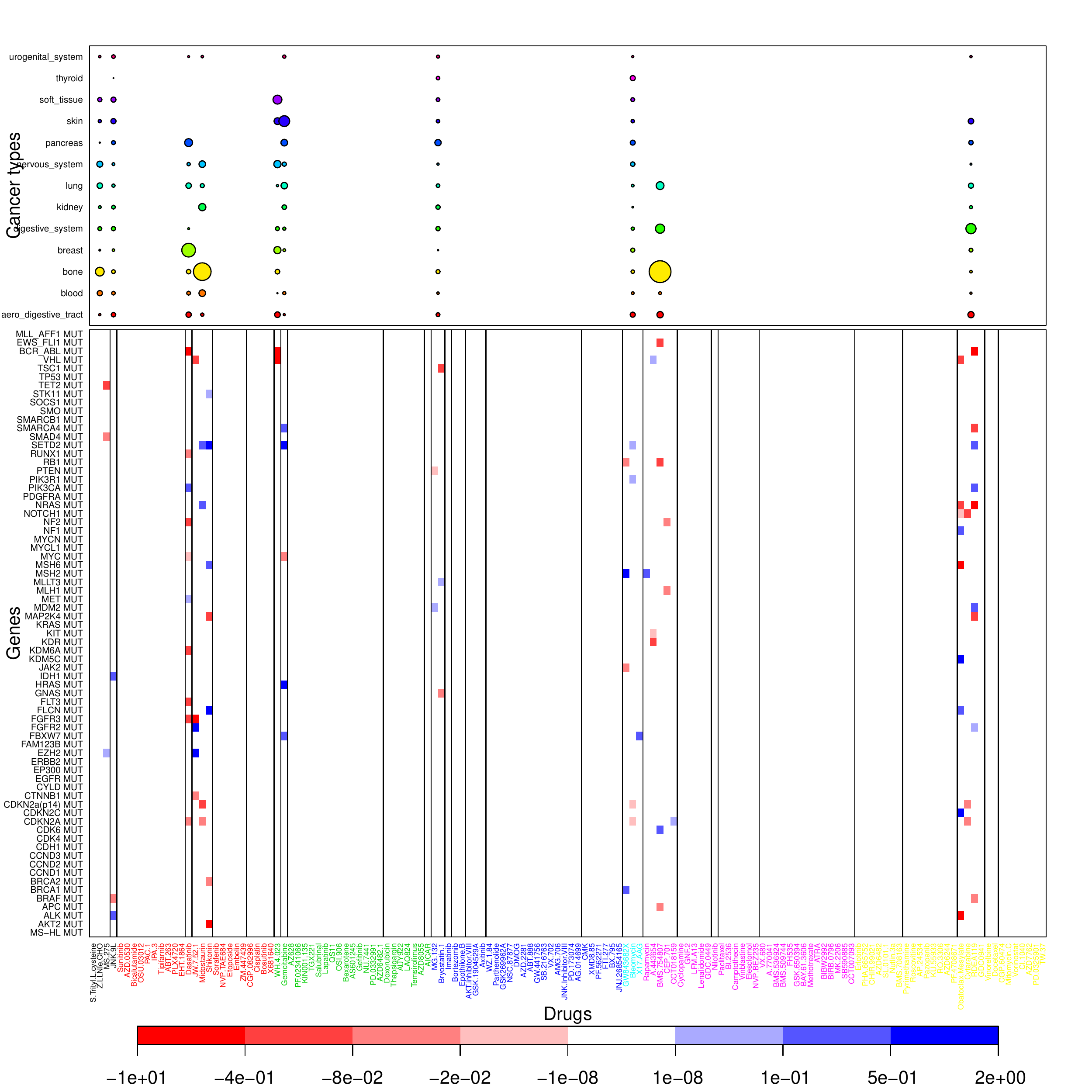}
\caption{Subset estimation results of cluster 12 by fitting the data of each second-level drug group in a generalized FMMR model.}
\label{fig:m4}
\end{figure}

Figures \ref{fig:m1}--\ref{fig:m4} are similar to Web Figures 7--8, except that they only show a subset of $\hat{\b{B}}_1^C$ and $\hat{\b{B}}_{12}^C$ with regard to the mutations of 71 cancer genes, where $\hat{\b{B}}_{12}^C = \hat{\b{B}}_{1}^C + \hat{\b{B}}_{2}^C$ are coefficient estimates of cluster 12 for drug group $C$.  Bottom panel of figure~\ref{fig:m1}--\ref{fig:m4} are in fact regression based drug-genetic association map with red indicating drug-sensitivity biomarkers and blue indicating drug-resistance biomarkers, conditioning on a cluster of cell lines (upper panel) identified by the generalized FMMR model.  We propose that above estimation results can be used for drug repurposing \citep{martins2015linking} for resistant tumors.  Take soft tissue cancers as an example of this drug repurposing.  We extracted those clusters which contain a high percentage of soft tissue cancers (large purple bubbles),  along with the subset cluster-wise coefficient estimates.  The extracted information is presented in Web Figure 5.  Its upper panel shows the $IC_{50}$ values of soft tissue cell lines in extracted clusters.   The drugs in Web Figure 5 were further filtered by only keeping those that have low $IC_{50}$ values without resistance biomarkers and that have high $IC_{50}$ values with resistance biomarkers, leading to Figure~\ref{fig:st2}.  In this figure, we conclude that the former set of drugs can be used as alternatives for the later set of drugs which progress resistance to soft tissue cancers.  
%Table~\ref{tab:st2} provides a summary of all pairwise drug combinations found in the dataset that are predicted in this way for soft tissue cancers.  
As a positive control of this strategy, drug Gemcitabine was found to be effective in soft tissue cancers resistant to standard chemotherapy (doxyrubicin) in \citet{merimsky2000gemcitabine}. \\

\begin{figure}
\centering
\includegraphics[scale=0.45]{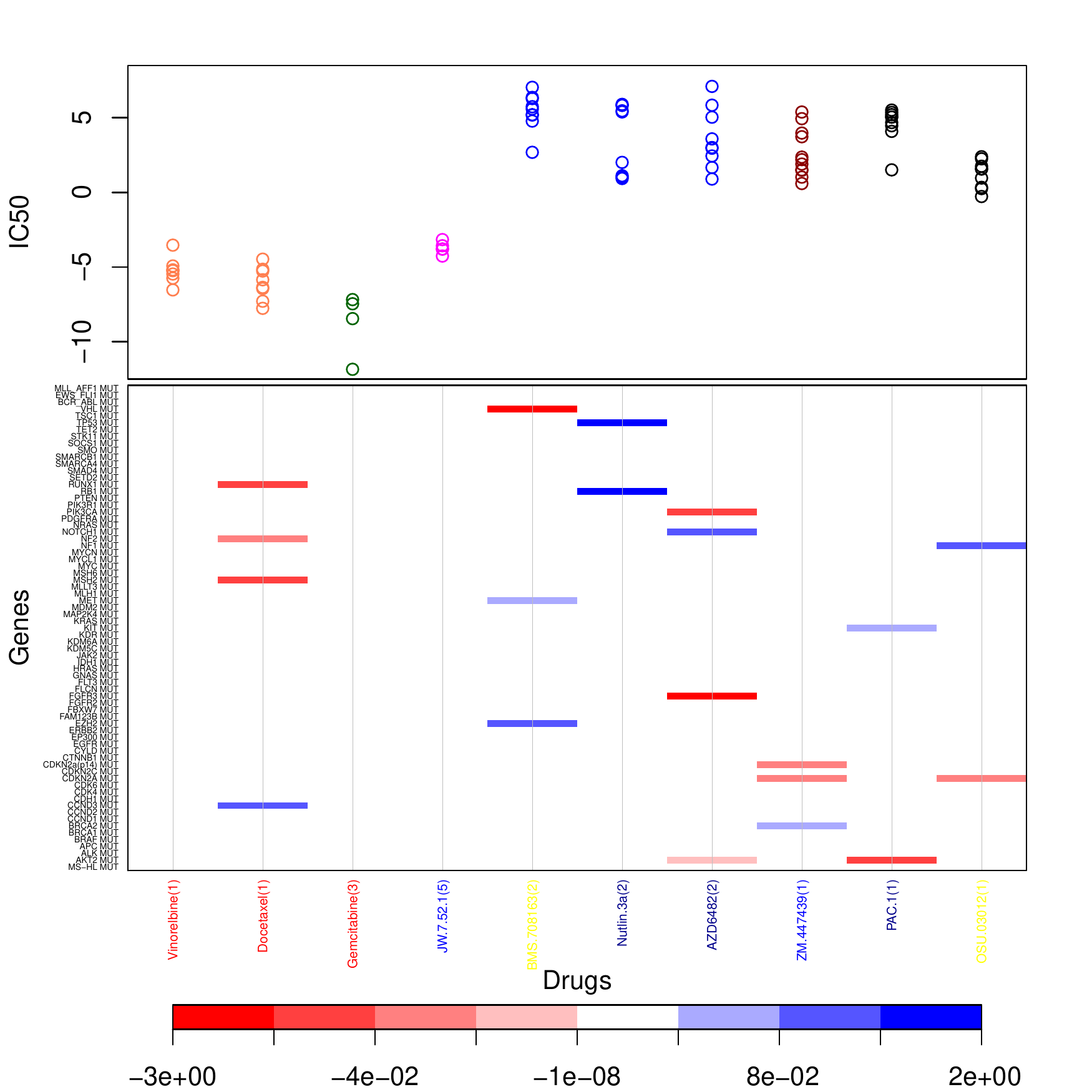}
\caption{The extracted $IC_{50}$ values of soft tissue cancers (upper panel) and a subset of the  coefficient estimates corresponding to the mutations of 71 cancer genes (bottom panel) after filtering those drugs not showing the desired pattern.}
\label{fig:st2}
\end{figure}

\noindent
{\bf Direct Q3: Can drug synergies be predicted?}\\

Our desire to allow overlapping cell line clusters becomes apparent in answering this question.  
Overlapping clusters can be used to guide drug combinations and identify potential drug synergies.  From $\hat{\b{Z}}_C$, we can get the cluster-wise $IC_{50}$ values for each drug.  Web Figure 9 are boxplots of the cluster-wise $IC_{50}$ values for 15 out of 140 drugs.  Due to limited space, the rest are not shown.  We then identify the drugs for which the magnitudes of $IC_{50}$s of overlapping cluster, say 12, are between those of clusters 1 and 2.  For instance, drugs MS.275 and GW843682X show such interesting pattern.  We focus on the cluster with highest $IC_{50}$s on average, say cluster 2, and use the coefficient estimates of overlapping cluster to guide the search of another drug to decrease the $IC_{50}$s of cluster 2 towards overlapping cluster.

\begin{figure}
\centering
\includegraphics[scale=0.45]{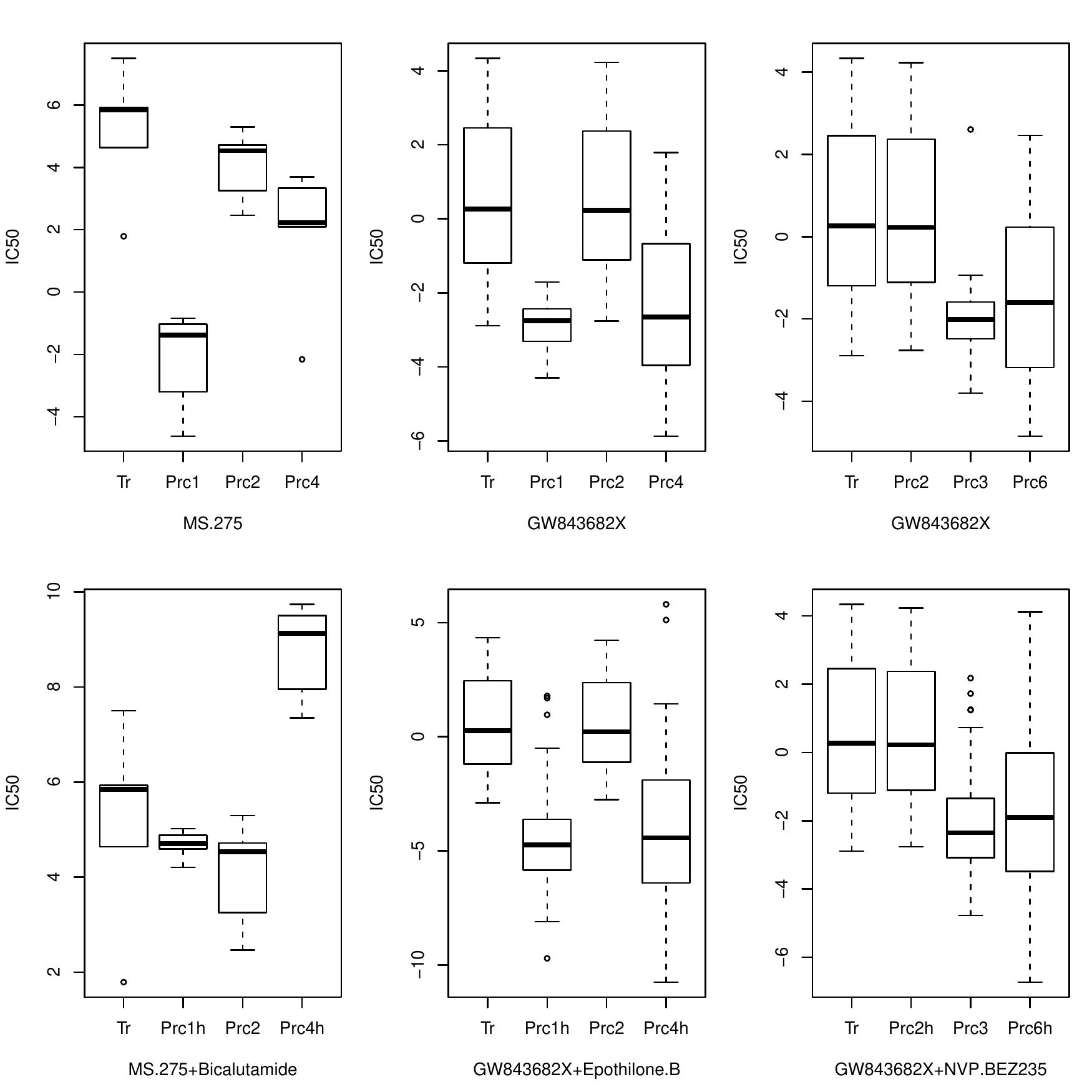}
\caption{Prediction of drug effects and drug combination effects.  In each plot, ``Tr" represents the observed $IC_{50}$ values;  the rest three boxplots are predicted $IC_{50}$  values using selected coefficient matrices.}
\label{fig:dc}
\end{figure}

Take GW843682X for an example,  cluster 2 has the highest $IC_{50}$s among clusters 1, 2, and 12.  We predicted the $IC_{50}$s in cluster 2 using $\hat{\b{B}}_{1}^{gw843682x}$, $\hat{\b{B}}_{2}^{gw843682x}$ and $\hat{\b{B}}_{12}^{gw843682x}$ respectively.  Predicted results are shown in top panel (middle) of Figure~\ref{fig:dc}.  By doing so, we want to see if coefficients $\hat{\b{B}}_{1}^{gw843682x}$ or $\hat{\b{B}}_{12}^{gw843682x}$ applied to cell lines in cluster 2 can generate low $IC_{50}$s similar to those in clusters 1 or 12.  Since $\hat{\b{B}}_{1}^{gw843682x}$ corresponds to the coefficient estimates of the target drug GW843682X,  we need to find another drug producing similar coefficient estimates to $\hat{\b{B}}_{1}^{gw843682x}$,  this leads to $\hat{\b{B}}_{3}^{epothilone.b}$.   The predicted $IC_{50}$s of  cluster 2 using $\hat{\b{B}}_{3}^{epothilone.b}$, $\hat{\b{B}}_{2}^{gw843682x}$ and  $\hat{\b{B}}_{3}^{epothilone.b} + \hat{\b{B}}_{2}^{gw843682x}$ are presented in bottom panel (middle) of Figure~\ref{fig:dc}.   It shows that the combination of GW843682X and Epothilone.B produces much lower $IC_{50}$s for cluster 2 than GW843682X alone.  Similar studies were conducted for MS.275 with respect to its clusters 1, 2, 12 and GW843682X with respect to its clusters 2, 3, 23.  Results are shown in left and right columns of Figure~\ref{fig:dc} respectively. The combination of GW843682X and NVP.BEZ235 decreases the $IC_{50}$s dramatically, while the combination of MS.275 and Bicalutamide does not show such effect.  As a positive control,  \citet{wildey2014pharmacogenomic} pointed out that a group of drugs (including GW843682X and Epothilone B) may provide a practical starting point to investigate combinatorial drug therapies for synergistic effect in small-cell lung cancer (SCLC).  Among the group of drugs,  they found synergism between GW843682X and CGP60474 via a preliminary study.

\section{Discussion}
\label{dis2}

In this paper, we proposed a new model for identifying therapeutic biomarkers for cancer which can answer specific questions regarding sensitivity, resistance and synergy.   We used a penalized likelihood approach for the FMMR model which enforced sparsity in the genomic features. To enable overlapping clustering, the FMMR model was then generalized and a new EM algorithm derived for estimation of model parameters. While the noteworthy plaid model can also be generalized for overlapping clustering multivariate regression data,  the generalized FMMR model markedly outperforms this method. 

Some improvements can be made for future developments on this work. First, other useful penalty functions for multivariate regression estimations can be adopted, such as the $MAP$ (MAster Predictor) penalty in \citet{peng2010regularized}, the $L_2SVS$ method by \citet{simila2007input} and the $L_{\infty}SVS$ method by \citet{turlach2005simultaneous}. Second, the multivariate normal distribution assumption on $\b{Y}_i$ can be extended to other more flexible parametric families of multivariate distributions, such as the skew-normal distribution \citep{azzalini2005skew} and the multivariate skew-$t$ distribution \citep{chen2014regularized}, for frequent presence of skewness and kurtosis in real data. 

As described in the CGP data analyses, we had about 37\% of the final analyses observations that had missing data and which we dealt with by using random forest imputation.  It would be of interest to explore the impact of this step more thoroughly.  An alternative would be to generalize to this problem,  the recently developed E-MS algorithm \citet{jiang2015ms} for model selection with incomplete data.

%  The \backmatter command formats the subsequent headings so that they
%  are in the journal style.  Please keep this command in your document
%  in this position, right after the final section of the main part of 
%  the paper and right before the Acknowledgements, Supplementary Materials,
%  and References sections. 

%  This section is optional.  Here is where you will want to cite
%  grants, people who helped with the paper, etc.  But keep it short!

\section*{Acknowledgements}

Hongmei Liu was supported by a doctoral fellowship from 
the Sylvester Comprehensive Cancer Center at the University of Miami.\vspace*{-8pt}

%  If your paper refers to supplementary web material, then you MUST
%  include this section!!  See Instructions for Authors at the journal
%  website http://www.biometrics.tibs.org

\section*{Supplementary Materials}

Web Appendices, Tables, and Figures referenced in Sections \ref{ss}, \ref{pm} and \ref{dsd} are available with this paper at the Biometrics website on Wiley Online Library. \vspace*{-8pt}

%  Here, we create the bibliographic entries manually, following the
%  journal style.  If you use this method or use natbib, PLEASE PAY
%  CAREFUL ATTENTION TO THE BIBLIOGRAPHIC STYLE IN A RECENT ISSUE OF
%  THE JOURNAL AND FOLLOW IT!  Failure to follow stylistic conventions
%  just lengthens the time spend copyediting your paper and hence its
%  position in the publication queue should it be accepted.

%  We greatly prefer that you incorporate the references for your
%  article into the body of the article as we have done here 
%  (you can use natbib or not as you choose) than use BiBTeX,
%  so that your article is self-contained in one file.
%  If you do use BiBTeX, please use the .bst file that comes with 
%  the distribution.  In this case, replace the thebibliography
%  environment below by 
%
\bibliographystyle{abbrvnat} 
\bibliography{reference}

\newpage

\bct{\BigHeading
Web-based Supplementary Materials for ``Precision Therapeutic Biomarker Identification with Application to the Cancer Genome Project"}
\ect

\appendix

\makeatletter
\renewcommand{\fnum@figure}{Web Figure \thefigure}
\renewcommand{\fnum@table}{Web Table \thetable}
\makeatother

\section{Simulation Studies}
\subsection{Design of the Simulations}

Two scenarios were designed to investigate the non-overlapping clustering and overlapping clustering of the new method respectively.  In each scenario, data were generated from model (2.1) with $K=3$, $p_n=15$, $q=3$ and $n=150, 450$.  Predictors $\b{x}_i$, $i=1, \dots, n$, were drawn independently from $N_{p_n} (\b{0}, \b{\Sigma}_1)$ with $\b{\Sigma}_1(i, j) = 0.5^{|i-j|}$.  Random errors $\b{\v}_i$, $i=1, \dots, n$, were drawn independently from $N_q(\b{0}, \b{\Sigma}_2)$ with $\b{\Sigma}_2 (i, j) = 0.75^{|i-j|}$.

The sparse coefficient matrices $\b{B}_k, k=1,2,3$, were generated as 
\begin{align}
\b{B}_k = \b{W} \otimes \b{S} \otimes \b{T}, 
\end{align}
where $\otimes$ indicates the element-wise product. Each entry of $\b{W}$ was drawn independently from $N(0, 1)$, each entry of $\b{S}$ was drawn independently from the Bernoulli distribution $B(1, p_1)$, and each row of $\b{T}$ (either all 1 or all 0) was determined by an independent draw from $B(1, p_2)$.   As a result, we expect $p_1p_2p_n$ relevant predictors for each response variable, and $(1-p_2)p_n$ predictors are expected to be irrelevant to all $q$ response variables.   We let $p_1=0.5$ and $p_2=0.9$.  Each simulation was repeated for 50 times.  A new sequence of $\b{B}_k, k=1,2,3$ and a new set of data $(\X, \Y)$ were generated for each repetition.  We assumed that $K$ is known, so that performance evaluation of the new method is not interfered by selection of $K$.  Its selection is separately examined in Section~\ref{as}. 

Scenario 1: The 3 clusters are non-overlapping, each cluster contains $n/3$ observations.  

Scenario 2: 70\% of the observations involve in single cluster, 22\% of the observations involve in two clusters and 8\% of the observations involve in three clusters.

\subsection{Quality Measures of Cluster Recovery }
\label{qm}

Quality measures in \citet{baeza1999modern} have been used to evaluate the clustering performance of the proposed method.  Denote $N_S$ the number of elements in $S$.  Below quality measures were proposed to compare a retrieved cluster $\hat{A}$ with a target cluster $A$, 
\begin{align}
&\text{``specificity"} = {N_{A \cap \hat{A}} \over N_A},  \quad  \text{``sensitivity"} = {N_{A \cap \hat{A}} \over N_{\hat{A}}}, \nonumber \\
&``F_1 \ \text{measure}" = {2N_{A \cap \hat{A}} \over N_A+N_{\hat{A}}}. \nonumber
\end{align}
The $F_1$ measure, taken as the harmonic mean of specificity and sensitivity, gives an overall measure of the clustering. 

Moreover, we employed the one-to-one correspondence match approach \citep{turner2005improved} to evaluate a sequence of retrieved clusters.  It includes about three steps.  First make the number of retrieved clusters to be the same as the number of target clusters by adding null clusters to retrieved clusters or dropping addition poorly retrieved clusters. Retrieved clusters are then matched to target clusters in pair. Finally, calculate the mean quality measures for the sequence of paired clusters.

\subsection{Results of the Simulations}
\label{rs}

A sequence of methods were implemented for comprehensive comparison with our new method,  including the generalized plaid model ({\it plaid}) and its revised counterparts ({\it aplaid});  the R package {\it flexmix} ({\it EM}) where response variables are treated as independent and overlap issues are not considered;  the generalized FMMR model with response variables treated as independent ({\it gEMseplasso0});  the generalized FMMR model with separate LASSO estimation of $\b{B}_k$ ({\it gEMseplasso});  and finally the original generalized FMMR model with MRCE estimation of $\b{B}_k$ ({\it gEMmrce}).

Their performances were evaluated via the three quality measures introduced in Section~\ref{qm} and the sum of squared errors (SSE) of coefficients estimates.  Results are summarized in Web Figure~\ref{fig:case1}--\ref{fig:case2}.  In scenario 1 when there is no overlap, {\it EM} and {\it gEMseplasso0} do equally well showing that the generalized FMMR model retains the ability to recover non-overlapping clusters. By taking into account of the covariance structure among response variables, {\it gEMseplasso} and {\it gEMmrce} outperform {\it EM} and {\it gEMseplasso0}, but there is little difference between {\it gEMseplasso} and {\it gEMmrce}.  In scenario 2 when there are 30\% overlaps,  specificity of {\it EM} drops down dramatically, resulting in a poor $F_1$ measure.  The {\it gEMseplasso0} does much better than {\it EM} by estimating the overlaps,  while  it is defeated by {\it gEMseplasso} and {\it gEMmrce}.  In this scenario, {\it gEMmrce} outperforms {\it gEMseplasso} when $n=150$.  However, they converge very fast with $n$,  both achieving a median 95\% clustering accuracy when $n=450$.  The {\it plaid} and {\it aplaid} have a poor performance in all scenarios.

\subsection{Additional Simulations}
\label{as}

We conducted another simulation for scenario 2 to evaluate the accuracy of BIC in selection of $K$.  
In the simulation, $K$ is unknown in candidate procedures {\it EM} and {\it gEMseplasso},  and is chosen by minimizing the BIC.  Results are summarized in Web Figure~\ref{fig:ksimu} and Web Table~\ref{tab:BIC}.  Given 30\% overlaps in simulated data,  BIC has a poor performance in identifying the true $K$ with the {\it EM} procedure.   Whereas by fitting the overlapping clusters with {\it gEMseplasso},  BIC obtains a much higher accuracy in selection of $K$ (see Web Table~\ref{tab:BIC}).   Moreover, although $K$ is misspecified for 12\% of the times when $n=450$,  {\it gEMseplasso} still has achieved a median 93\% clustering accuracy (see Web Figure~\ref{fig:ksimu}).

An additional simulation was conducted for scenario 2 to investigate the advantage of modeling with multivariate response variables in the new method.  We compared the performance of modeling with all 3 response variables versus modeling with each of the 3 response variables in the generalized FMMR model.  Results are summarized in Web Figure~\ref{fig:unisimu}.  Using multivariate response variables leads to a much higher clustering accuracy.

\section{Two algorithms of the Generalized Plaid Model}
\begin{algorithm}
For {$ k = 1:K $} do

At this point, we have already found $k-1$ layers and are searching for the $k$th layer.  Calculate the residuals from previous $k-1$ layers by 
$$ \b{z}_i = \b{y}_i - \sum_{l=1}^{k-1} \b{x}_i^T \b{B}_l P_{il}, \quad i=1, \dots, n.$$

Initialize $\b{P}_k^0$. 

For {$ s = 1:S $} do

Given $\b{P}_k^{s-1}$, compute 
$$ \b{B}_k^{s} = \arg \min_{\b{B}_k} {1 \over 2} \sum_{i=1}^n \|\b{z}_i - \b{x}_i^T \b{B}_k P_{ik}^{s-1}\|_2^2 + \sum_{j=1}^q \lambda_{jk} \|\b{b}_{jk}\|_1, $$
where $\b{B}_k = (\b{b}_{1k}, \dots, \b{b}_{qk}).$

Given $\b{B}_k^s$, compute 
\begin{eqnarray}
P_{ik}^s = 
\begin{cases}
0.5 + \min \{0.5, {S \over 2(S-T)} \}, \quad \|\b{z}_i - \b{x}_i^T \b{B}_k^s\|_2^2 \le \|\b{z}_i\|_2^2,  \\
0.5 - \min \{0.5, {S \over 2(S-T)} \}, \quad \text{otherwise}, \nonumber
\end{cases}
\end{eqnarray}
where $T$ was taken as $0.2S$ for $i=1, \dots, n$. 

End for

Given $\b{P}_k^S$, update $\b{B}_k^{S+1}$.

Given $\b{B}_k^{S+1}$, prune layer memberships by 
\begin{eqnarray}
P_{ik}^{S+1} = 
\begin{cases}
1, \quad \|\b{z}_i - \b{x}_i^T \b{B}_k\|_2^2 \le \tau \|\b{z}_i\|_2^2,  \\
0, \quad \text{otherwise}, \nonumber
\end{cases}
\end{eqnarray}
where $\tau \in (0, 1)$ for $i=1, \dots, n$.

Given $\b{P}_k^{S+1}$, update $\b{B}_k^{S+2}$.

Given $\b{P}_1^{S+1}$, \dots, $\b{P}_k^{S+1}$, back fit the layers $R$ times. We set $R=2$. 

End for
\caption{Plaid}
\label{plaid}
\end{algorithm}

\begin{algorithm}
Initialize $\b{B}_1^0, \dots, \b{B}_{K}^0$ and $\b{P}_1^0, \dots, \b{P}_{K}^0$. 

For {$ s = 1:S $} do

For {$ k = 1:K $} do

Given $(\b{B}_1^s, \dots, \b{B}_{k-1}^s, \b{B}_{k+1}^{s-1}, \dots, \b{B}_{K}^{k-1})$ and $(\b{P}_1^s, \dots, \b{P}_{k-1}^s, \b{P}_{k}^{s-1}, \dots, \b{P}_{K}^{s-1})$, compute
$$ \b{z}_i = \b{y}_i - \sum_{l=1}^{k-1} \b{x}_i^T \b{B}_l^s P_{il}^s - \sum_{l=k+1}^{K} \b{x}_i^T \b{B}_l^{s-1} P_{il}^{s-1}, \quad  i = 1, \dots, n, $$ 

$$ \b{B}_k^s = \arg \min_{\b{B}_k} {1 \over 2} \sum_{i=1}^n \|\b{z}_i - \b{x}_i^T \b{B}_k P_{ik}^{s-1}\|_2^2 + \sum_{j=1}^q \lambda_{jk} \|\b{b}_{jk}\|_1, $$  
where $\b{B}_k = (\b{b}_{1k}, \dots, \b{b}_{qk}).$

Given $\b{B}_k^s$, compute 
\begin{eqnarray}
P_{ik}^s = 
\begin{cases}
0.5 + \min \{0.5, {S \over 2(S-T)} \}, \quad  \|\b{z}_i - \b{x}_i^T \b{B}_k\|_2^2 \le \tau\|\b{z}_i\|_2^2,  \\
0.5 - \min \{0.5, {S \over 2(S-T)} \}, \quad  \text{otherwise},  \nonumber
\end{cases}
\end{eqnarray}
where $T$ was taken as $0.2S$ and $\tau \in (0, 1)$ for $i = 1, \dots, n$. 

End for

End for
\caption{All-plaid}
\label{allplaid}
\end{algorithm}

\section*{Web Tables}

\begin{table}
\begin{center}
\caption{Accuracy of BIC in selection of $K$. \label{tab:BIC}}
\begin{tabular}{ccc}
\hline
sample size & {\it EM} & {\it gEMseplasso} \\
\hline
n=150 & 0.32 & 0.56 \\
n=450 & 0.22 & 0.88 \\
\hline
\end{tabular}
\end{center}
\bigskip
\end{table}

\section*{Web Figures}

\begin{figure}
\centering
\includegraphics[scale=0.38]{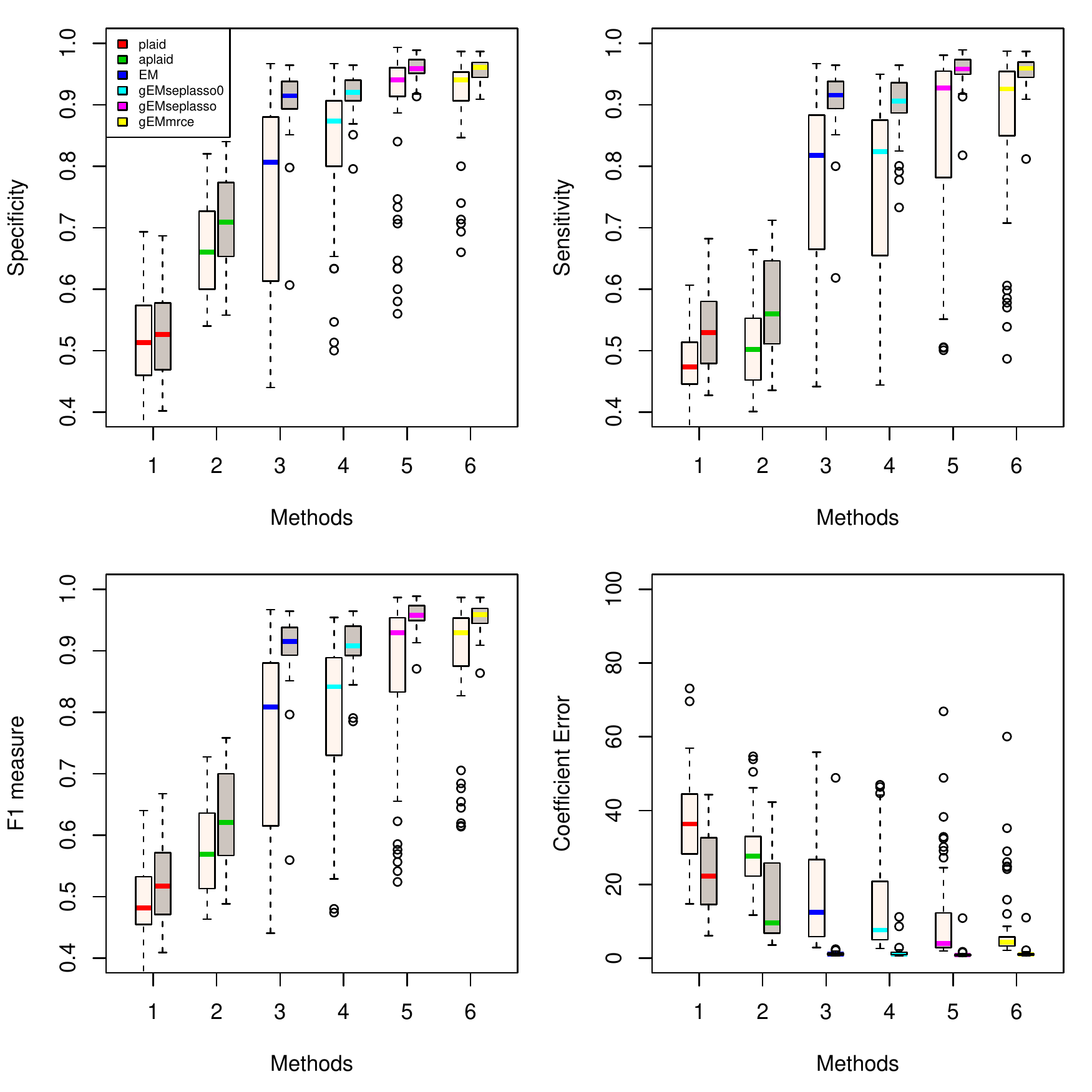}
\caption{Simulation results of scenario 1. The two box plots under each method correspond to $n=150, 450$ respectively.}
\label{fig:case1}
\end{figure}

\begin{figure}
\centering
\includegraphics[scale=0.38]{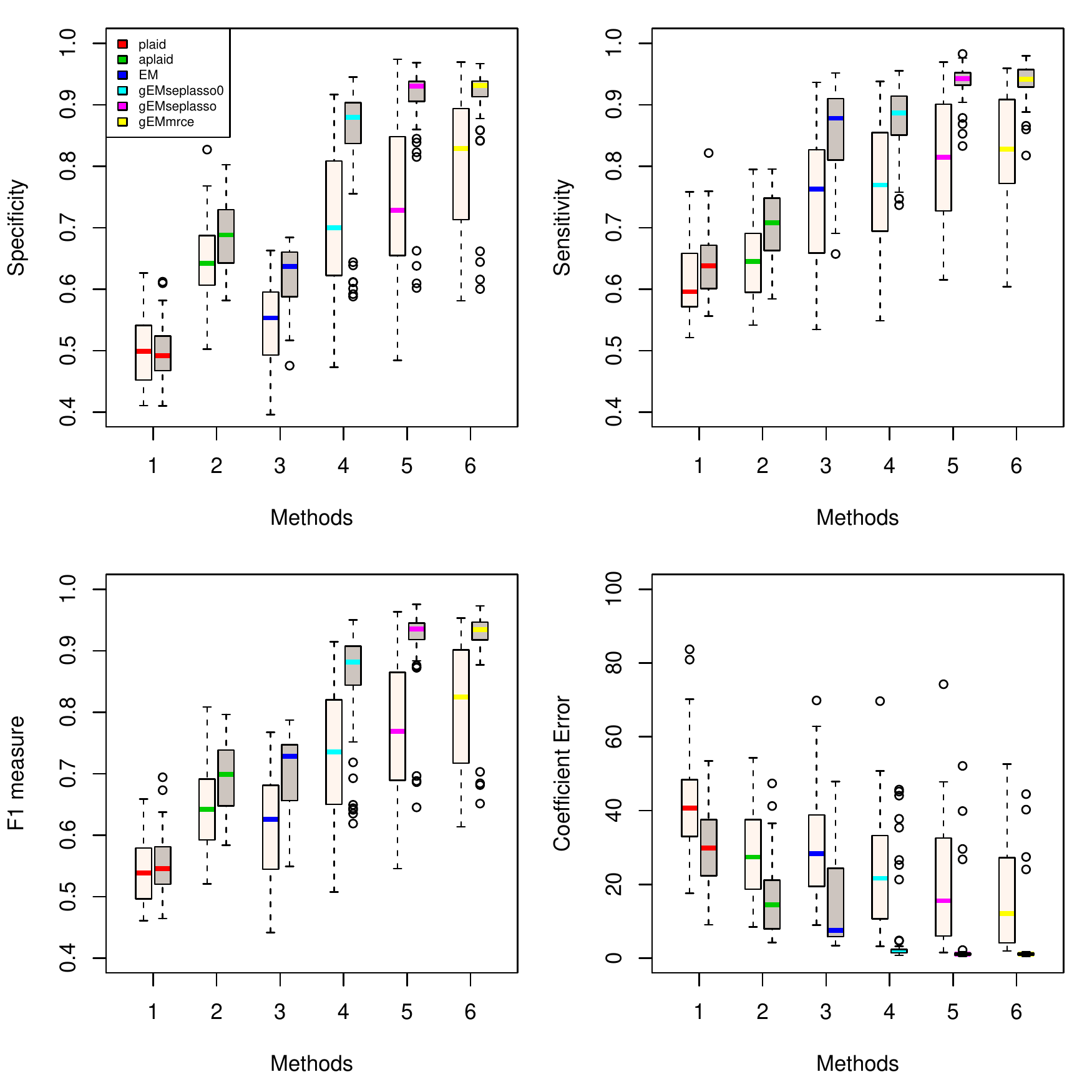}
\caption{Simulation results of scenario 2.}
\label{fig:case2}
\end{figure}

\begin{figure}
\centering
\includegraphics[scale=0.38]{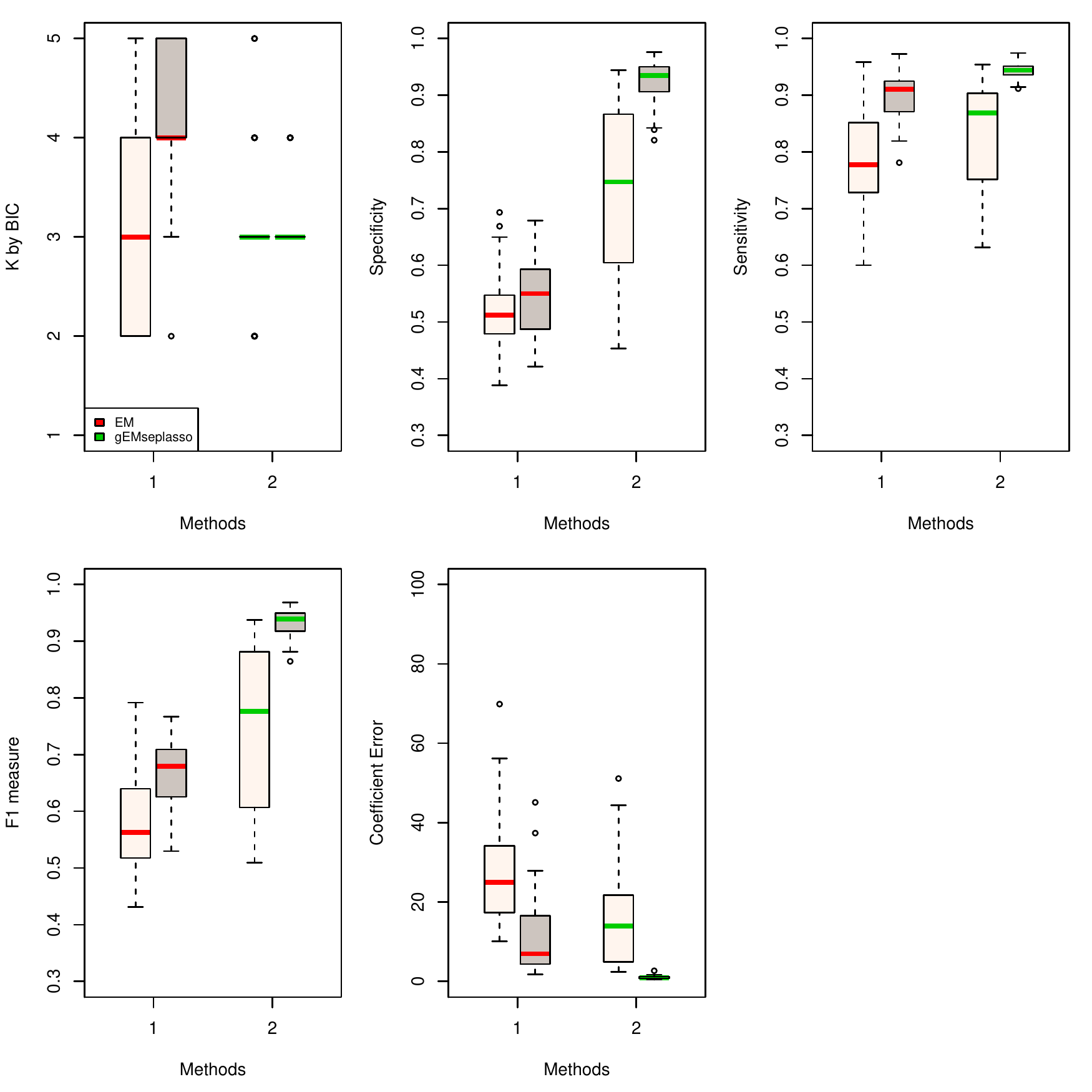}
\caption{Simulation results of BIC in selection of $K$.}
\label{fig:ksimu}
\end{figure}

\begin{figure}
\centering
\includegraphics[scale=0.38]{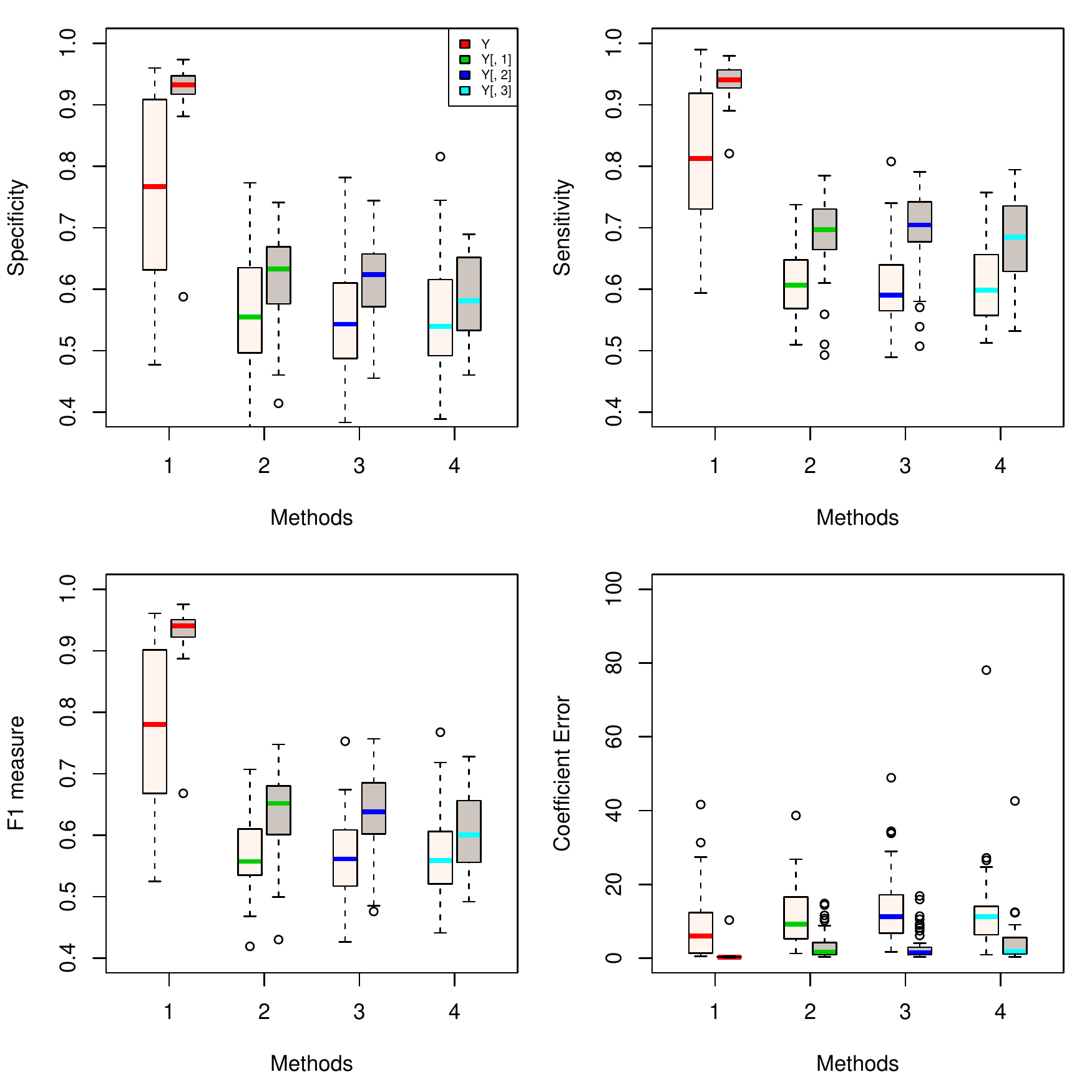}
\caption{Simulation results of using multivariate responses versus univariate responses.}
\label{fig:unisimu}
\end{figure}

\begin{figure}
\centering
\includegraphics[width=8.7cm]{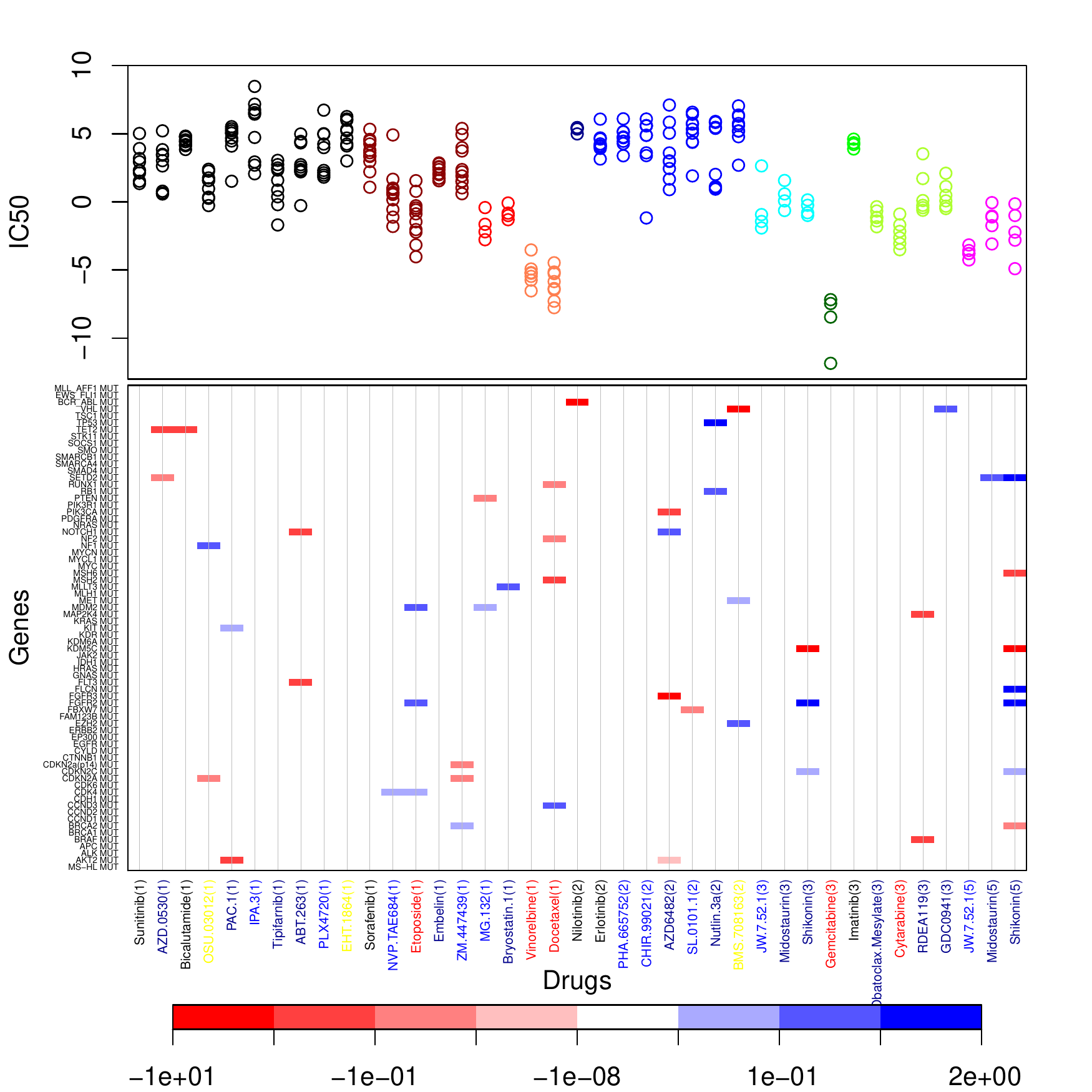}
\caption{The extracted $IC_{50}$ values of soft tissue cancers (upper panel) and a subset of the coefficient estimates corresponding to the mutations of 71 cancer genes (bottom panel).  Drug names in red indicate type ``chemo", black indicate type ``clinical", blue indicate type ``experimental", darkblue indicate type ``in clinical development" and yellow indicate drug type not available.}
\label{fig:st}
\end{figure}

\begin{figure}
\centering
\includegraphics[width=11.4cm, height=10cm, trim=85 512 150 0, clip=true]{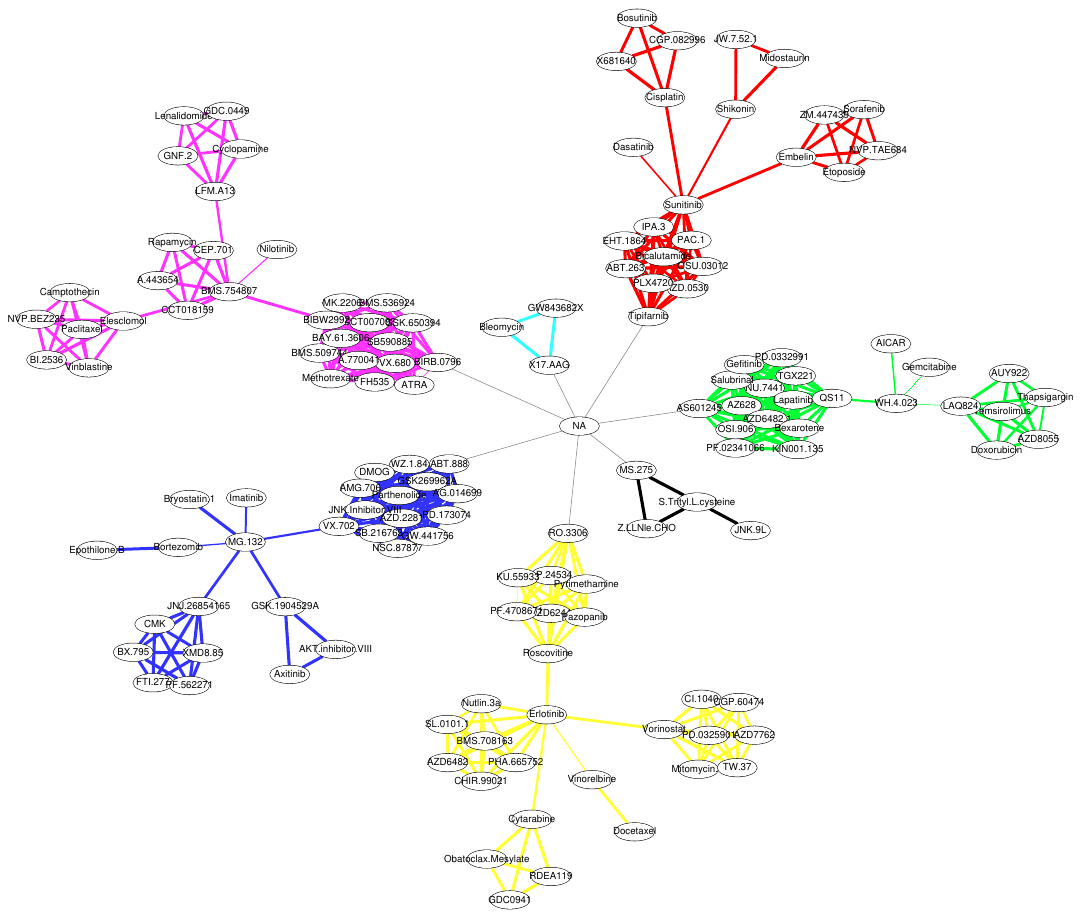}
\caption{Groups of drugs generated from the APC algorithm. Each node implies a type of drug with the drug name on it.  Drugs connected with the same color of strokes belong to a first-level drug group, and drugs connected within a closure belong to a second-level drug group.}
\label{fig:interm}
\end{figure}

\begin{figure}
\centering
\includegraphics[width=11.4cm, height=10cm]{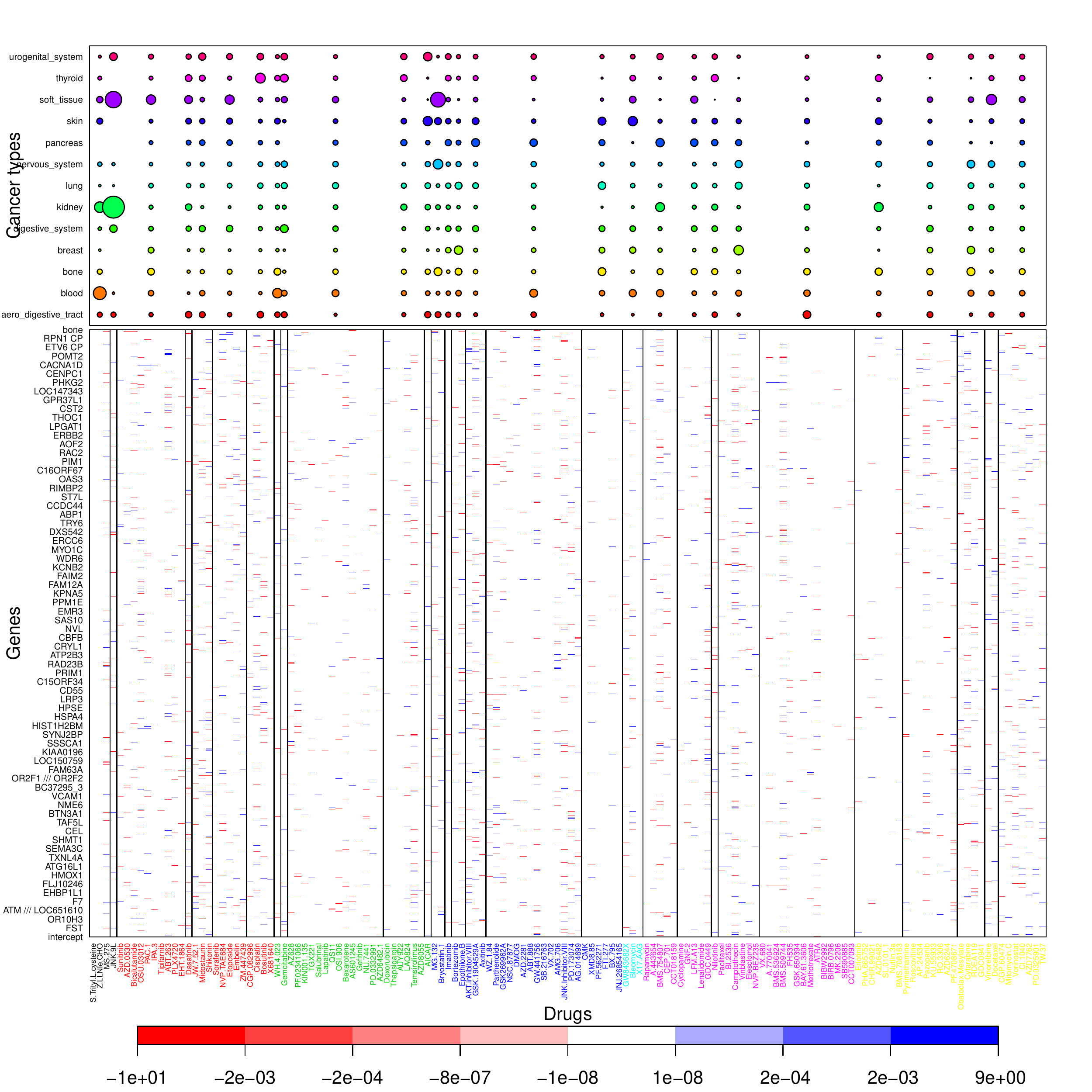}
\caption{Estimation results of cluster 1 by fitting the data of each second-level drug group in a generalized FMMR model.  Bottom panel plots the coefficient estimates, and upper panel describes the cell line compositions of cluster 1 by cancer types. }
\label{fig:s1}
\end{figure}

\begin{figure}
\centering
\includegraphics[width=11.4cm, height=10cm]{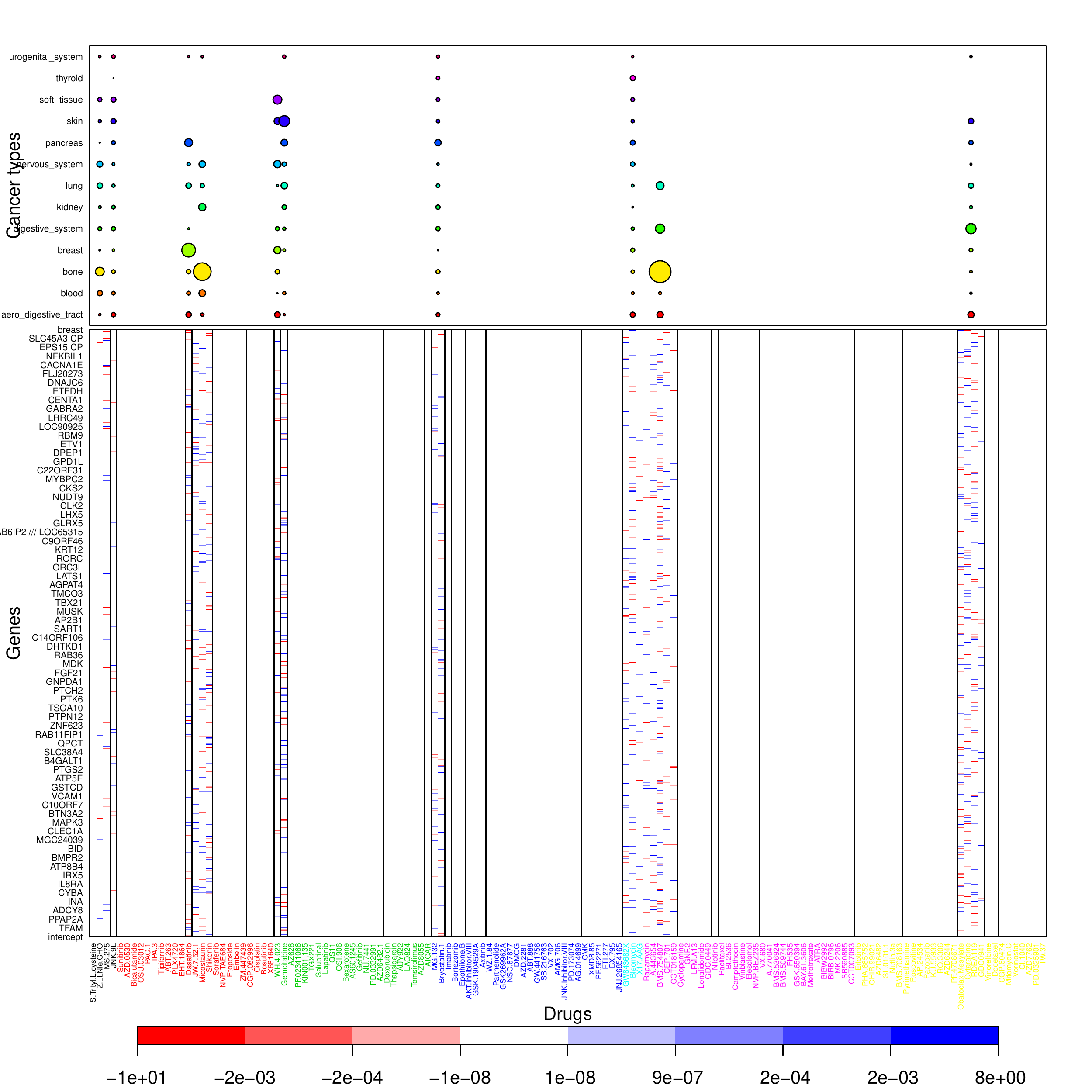}
\caption{Estimation results of cluster 12 by fitting the data of each second-level drug group in a generalized FMMR model.}
\label{fig:s4}
\end{figure}

\begin{figure}
\centering
\includegraphics[width=11.4cm, height=10cm]{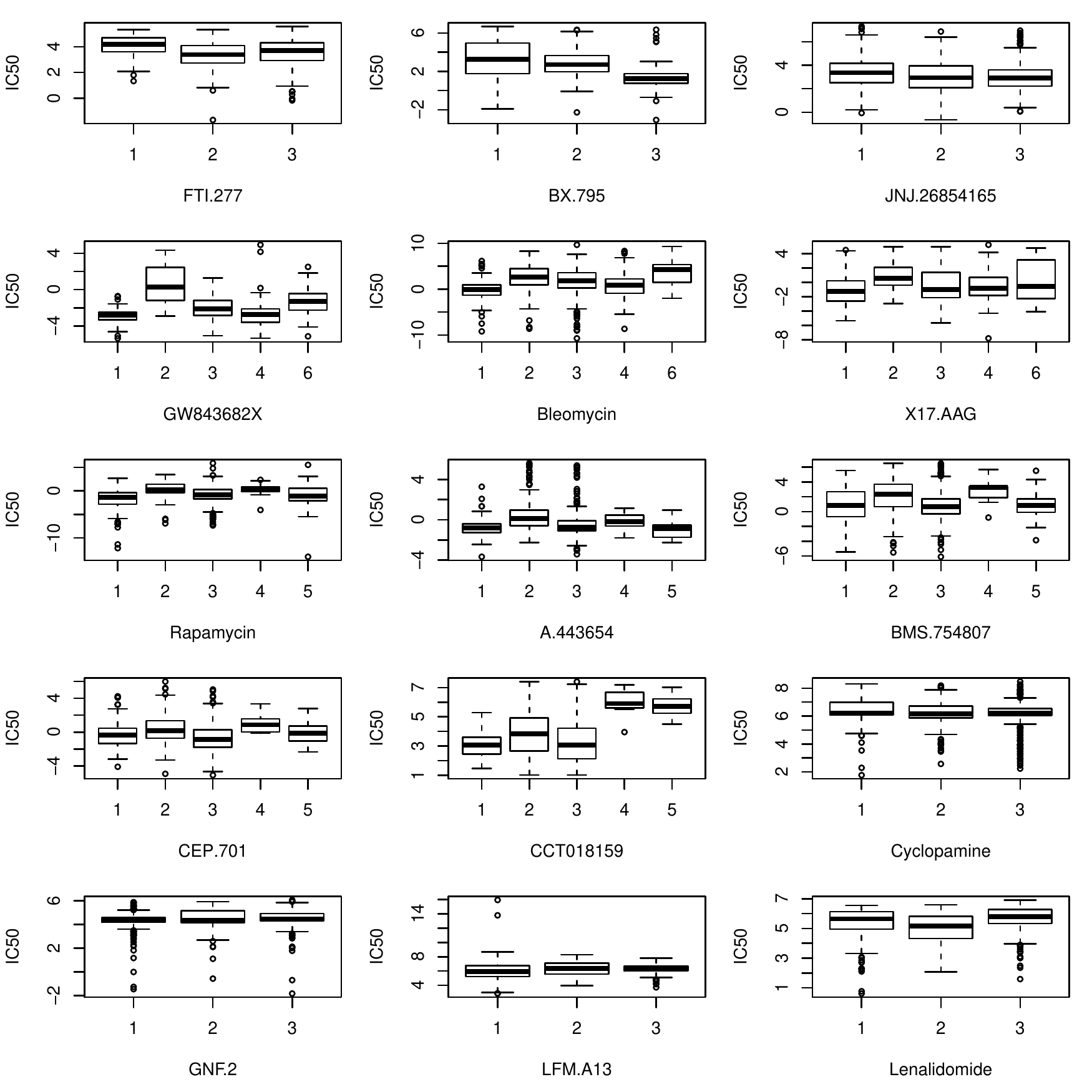}
\caption{Cluster-wise $IC_{50}$ values of each drug.}
\label{fig:ic}
\end{figure}

\end{document}